\newtheorem{teo}{Theorem}
\newtheorem{cor}{Corollary}
\newtheorem{prop}{Proposition}
\newtheorem{defi}{Definition}
\newtheorem{exem}{Example}
\newtheorem{rmk}{Remark}
\begin{document}
	
	\begin{frontmatter}
		
		\title{Bounded Generalized Mixture Functions\footnote{Preprint submited to Information Fusion}}
		
		\author[ufrn,ufersa]{Antonio Diego S. Farias\corref{cor2}}
		\ead{antonio.diego@ufersa.edu.br}
		
		\author[ufrn]{Valdigleis S. Costa\corref{cor1}}
		\ead{valdigleis@gmail.com}
		
		\author[ufrn]{Luiz Ranyer de Ara\'ujo Lopes}
		\ead{ranyer.lopes@gmail.com}
		
		\author[ufrn]{Benjam\'in Bedregal}
		\ead{bedregal@dimap.ufrn.br}
		
		\author[ufrn]{Regivan H. N. Santiago}
		\ead{regivan@dimap.ufrn.br}
				
		\cortext[cor1]{Corresponding author}
		
		\cortext[cor2]{Principal corresponding author}
		
		\address[ufrn]{Group for Logic, Language, Information, Theory and Applications \\Federal University of Rio Grande do Norte - UFRN, Department of Informatics and Applied Mathematics, Avenue Senador Salgado Filho, 3000, Universitary Campus of Lagoa Nova, 59.078-970, Natal, RN, Brazil}
		
		\address[ufersa]{Group of  Theory of Computation, Logic and Fuzzy Mathematics \\Federal Rural University of Semi-Arid - UFERSA, Department of Exacts and Naturals Sciences\\ Highway 226, KM 405,  S\~ao Geraldo, 59.900-000, Pau dos Ferros, RN, Brazil.\footnote{Universidade Federal Rural do Semi-\'Arido - UFERSA, Campus Pau dos Ferros, BR 226, KM 405, S\~ao Geraldo, 59.900-000, Pau dos Ferros, RN, Brasil.}}
		
		\begin{abstract}
			In literature, it is common to find problems which require a way to encode a finite set of information into a single data; usually means are used for that. An important generalization of means are the so called  Aggregation Functions, with a noteworthy subclass called  \textsf{OWA} functions. There are, however, further functions which are able to provide such codification which do not satisfy the definition of aggregation functions; this is the case of pre-aggregation and mixture functions. 
			
			In this paper we investigate two special types of functions:  \textit{Generalized Mixture}  and \textit{Bounded Generalized Mixture} functions. They generalize both: \textsf{OWA} and Mixture functions. Both Generalized and Bounded Generalized Mixture functions are developed in such way that the weight vectors are variables depending on the input vector. A special generalized mixture operator, $\mathbf{H}$, is provided and applied in a simple toy example.			
		\end{abstract}
		
		\begin{keyword}
			\texttt{Aggregation functions}\sep \texttt{Pre-aggregation functions}\sep \texttt{OWA functions}\sep \texttt{Mixture functions} \sep \texttt{Generalized Mixture functions} \sep \texttt{Bounded Generalized Mixture functions}
		\end{keyword}
		
	\end{frontmatter}

\section{Introduction}

Several applications require the conversion of a finite  collection of data (of same type) into a single data  \cite{Dubois2004,FariasJIFS2016,Paternain2012,Paternain2015}. Some abstract tools which deal with that are the so called \textit{Aggregation Functions} and \textit{Mixture Functions} \cite{Gleb2016}. Yager \cite{Yager1988} introduced a special class of aggregation functions, called  \textit{Orde Weighted Averaging} - \textsf{OWA}, and ever since several authors have proposed generalizations for them. Two generalizations are:  (1) \textit{Mixture functions}  \cite{Gleb2016} and (2) \textit{Generalized Mixture functions} \cite{Pereira1999,Pereira2000,Pereira2003}. They are not always aggregation functions, since some of them do not satisfy the monotonicity, however they are  also efficient in order to ``codify'' a set of information into a sigle one. Some other extensions were also proposed and can be found in: \cite{Yager2006,LIZASOAIN2013,LLAMAZARES2015131,XU2006231,YAGER2010374}. 

In this paper we investigate \textit{Generalized Mixture functions} (\textsf{GM}), which are weighted averaging means whose weights are dynamic; namely the weights are not fixed beforehand but depend on the input variables. This provides a  more flexible usage of weights which is not possible for functions like OWAs. The resulting functions do not require the property of monotonicity, essencial in aggregations, instead they require diretional monotonicity\cite{Bustince2015}. Further we investigate the weakening of condition  $\sum\limits_{i=1}^n w_i=1 \mbox{ to } \sum\limits_{i=1}^n w_i\le 1$, thereby obtaining another generalization of \textsf{OWAs}, called {\it Bounded Generalized Mixture} - \textsf{BGM} functions. This paper ends with the proposal of a special \textsf{GM} function,  $\mathbf{H}$, which have a wide range of properties like: idempotency, symmetry, homogeneity and diretional monotonicity. It is applyied on a simple toy example. 

This work is structured in the following way: The next section provides the basic concepts of Aggregation functions; section 3 introduces the concepts of \textit{Generalized Mixture} (\textsf{GM}) and \textit{Bounded Generalized Mixture}  (\textsf{BGM}) operators, it shows properties, constructions, examples and propose a particular \textsf{GM} function: $\mathbf{H}$; section 4 presents the final remarks and section 5 an illustrative application for \textsf{GM}.

\section{Aggregation Functions}

Aggregation functions are important mathematical tools for applications in various fields, such as: Information fuzzy \cite{FariasJIFS2016,Hancer2015,Lingling2015}; Decision making \cite{Paternain2012,Bustince2013,Yager2011}; Image processing \cite{Paternain2015,Beliakov2012,Joseph2014} and Engineering \cite{Liang2014}.

\begin{defi}
	An $n$-ary aggregation function is a mapping $A:[0,1]^n\rightarrow[0,1]$, which associates each $n$-dimensional vector ${\bf x}=(x_1,\ldots,x_n)$ to a single value $A({\bf x})$ and satisfies conditions of: (A1) mononicity (A2) Boundary:
	\begin{enumerate}
		\item[(A1)] If ${\bf x}\le {\bf y}$, i.e., $x_i\le y_i$, for all $i=1,2,...,n$, then $A({\bf x})\le A({\bf y})$;
		\item[(A2)] $A(0,...,0)=0$ and $A(1,...,1)=1$.
	\end{enumerate}
\end{defi}

\begin{exem}
	Given ${\bf x}=(x_{1}, \ldots, x_{n})$,
	\begin{enumerate}
		\item[(a)] Arithmetic Mean: $Arith({\bf x}) = \displaystyle\frac{1}{n}(x_{1} + x_{2} ... + x_{n})$
		
		\item[(b)] Minimum: $Min({\bf x}) = min\{ x_{1}, x_{2},..., x_{n} \}$;
		
		\item[(c)] Maximum: $Max({\bf x}) = max\{ x_{1}, x_{2},..., x_{n} \}$;
		
		\item[(d)] Product: $Prod({\bf x}) =\prod\limits_{i=1}^{n}x_i$;
		
		\item[(e)] Weighted Average: For ${\bf w}=(w_1,\cdots,w_n)\in[0,1]^n$, with $\sum\limits_{i=1}^n w_i=1$,\linebreak $WAvg_{\bf w}({\bf x})=\sum\limits_{i=1}^nw_i\cdot x_i$.
	\end{enumerate}
\end{exem}

\begin{rmk}
	From now on we will use the short term ``aggregation"  instead of ``$n$-ary aggregation function". 
\end{rmk}

Aggregations can be divided into four distinct classes: \textit{Averaging, Conjunctive, Disjunctive} and \textit{Mixed}. A wider description about them can be found in \cite{Beliakov2016,Dubois2000,Grabisch2009}. In this work, we only study averaging functions which satisfy the following property:

\begin{defi}
	A function $f:[0,1]^n\longrightarrow [0,1]$ satisfies the {\bf averaging property}, if for all ${\bf x}\in [0,1]^n$:
	$$Min({\bf x}) \leq f({\bf x}) \leq Max({\bf x}).$$
\end{defi}

When aggregations satisfy the averaging property we say that they are {\bf averaging aggregations}. The functions $Min$, $Max$, $Arith$ and $WAvg$ are examples of averaging aggregations. Besides that, an aggregation $A:[0,1]^n\rightarrow[0,1]$ can satisfy:

\begin{enumerate}	
	\item[(1)] {\bf Idempotency}, i.e., $A(x,..., x) = x$ for all $x \in [0,1]$;
	
	\item[(2)] {\bf Homogeneity}  of order $k$ , i.e., for all $\lambda \in [0,1]$ and ${\bf x} \in [0,1]^{n}$,\linebreak $f(\lambda x_{1},\lambda x_{2},..., \lambda x_{n}) = \lambda^k f(x_{1}, x_{2},..., x_{n})$. When $A$ is homogeneous of order $1$ we simply say that $f$ is homogeneous;
	
	\item[(3)] {\bf Shift-invariance}, i.e., $f(x_{1} + r, x_{2} + r,.., x_{n} + r) = f(x_{1}, x_{2},.., x_{n}) + r$, for all $r\in [-1,1]$, ${\bf x} \in [0, 1]^{n}$,  $(x_1+r,x_2+r,...,x_n+r)\in [0,1]^n$ and $f(x_1,x_2,...,x_n)+r\in [0,1]$;
	
	\item[(4)] {\bf Monotonicity}, i.e., $f(x_1,\cdots,x_n)\le f(y_1,\cdots,y_n)$ whenever $x_i\le y_i$, for all $i\in\{1,\cdots,n\}$;
	
	\item[(5)] {\bf Strict monotonicity}, i.e., $f({\bf x})<f({\bf y})$ whenever ${\bf x}<{\bf y}$, i.e., ${\bf x}\le {\bf y}$ and ${\bf x}\ne {\bf y}$, where ${\bf x}=(x_1,\cdots,x_n)$ and ${\bf y}=(y_1,\cdots,y_n)$;
	
	\item[(6)] {\bf Symmetry}, i.e., its value is not changed under the permutations of the coordinates of ${\bf x}$, i.e.,
	$$f(x_{1},..., x_{n}) = f(x_{\sigma_{(1)}}, \cdots, x_{\sigma_{(n)}})$$
	for all $x$ and any permutation $\sigma:\{1,...,n\}\rightarrow\{1,..., n\}$;
	
	\item[(7)] The existence of {\bf neutral element}, i.e., there is $e \in [0,1]$, such that for $t \in [0,1]$ at any coordinate  of the input vector ${\bf x}$, it has to be:
	$$f(e,..., e, t, e,...,e) = t;$$
	
	\item[(8)] The existence of {\bf absorbing element} or \textbf{(\textit{annihilator})}, i.e., there is $a \in [0,1]$, such that
	$$f(x_{1},..., x_{i-1}, a, x_{i+1}, ..., x_{n}) = a;$$
	
	\item[(9)] The existence of {\bf zero divisor}, i.e., there is $a\in\ ]0,1[$, such that for any ${\bf x}\in]0,1]^n$, with $ a $ in one of its coordinates it is verified that $f({\bf x}) = 0$;
	
	\item[(10)] The existence of  {\bf one divisor}, i.e., there is $a \in [0, 1[$, such that for any ${\bf x}\in]0,1]^n$, with $a$ in one of its coordinates it is verified that $f({\bf x}) = 1$.
\end{enumerate}

\begin{exem}\hfill
	\begin{enumerate}
		\item[(i)] The functions: $Arith, Min$ and $Max$ are examples of idempotent, homogeneous, shift-invariant and symmetric aggregations.
		\item[(ii)] $Min$ and $Max$ have the elements $0$ and $1$ as its respective annihilators, whereas $Arith$ does not have annihiladors.
		\item[(iii)] $Min$, $Max$ and $Arith$ does not have zero divisors and one divisors.
	\end{enumerate}	
\end{exem}

\subsection{Ordered Weighted Averaging - \textsf{OWA} Functions}

In the field of aggregations there is a very important kind of function in which the aggregation behavior is provided parametrically; they are called: \textit{\textbf{Ordered Weighted Averaging}} or simply \textsf{OWA} \cite{Yager1988}.


\begin{defi}
	Let be an input vector ${\bf x}=(x_1,x_2,\dots,x_n)\in [0,1]^n$ and a vector of weights ${\bf w}=(w_1,\dots, w_n)\in[0,1]^n$, such that $\sum\limits_{i=1}^n w_i=1$. Assuming a permutation of $\bf x$: 
	$$sort({\bf x})=(x_{(1)},x_{(2)},\dots, x_{(n)})$$
	such that $x_{(1)}\geq x_{(2)}\geq\dots\geq x_{(n)}$. The Ordered Weighted Averaging (\textsf{OWA}) function with respect to ${\bf w}$, is the function $\textsf{OWA}_{\bf w}:[0,1]^n\rightarrow [0,1]$ defined by:
	$$\textsf{OWA}_{\bf w}({\bf x})=\sum\limits_{i=1}^n w_i\cdot x_{(i)}$$
\end{defi} 

\begin{rmk}
	In what follows we remove ${\bf w}$ from $\textsf{OWA}_{\bf w}({\bf x})$ and write only \textsf{OWA}.
\end{rmk}

Examples and properties of \textsf{OWA} functions can be found in \cite{Gleb2016}. Here is important to note that $Min$, $Max$, $Arith$ and {\it Median} (described bellow) are examples of \textsf{OWA}.

\begin{exem}
	Given a vector ${\bf x}$ and its ordered permutation $sort({\bf x}) = (x_{(1)},\ldots,x_{(n)})$, the {\it Median} function
	$$Med({\bf x})= \left\{\begin{array}{ll}
	\frac{1}{2}(x_{(k)}+x_{(k+1)}), & \mbox{ if } n=2k\\
	x_{(k+1)}, & \mbox{ if } n=2k+1
	\end{array}\right.$$
	is an $\textsf{OWA}$ function in which the vector of weights is defined by:
	\begin{itemize}
		\item If $n$ is odd, then $w_i=0$ for all $i\ne\frac{n+1}{2}$ and $w_{\frac{ n+1}{2}}=1$.
		\item If $n$ is even, then $w_i=0$ for all $i\ne \frac{n}{2}$ and $i\ne \frac{n+2}{2}$, and $w_{\frac{n}{2}}=w_{\frac{n+2}{2}}=\frac{1}{2}$.
	\end{itemize}
\end{exem}

In addition there are some special types of \textsf{OWA}s: {\bf centered OWA}  or \textsf{cOWA}\cite{Yager2006}.

The \textsf{OWA} functions are defined in terms of a predetermined vector of weights; namely this vector of weights is fixed previously by the user. In the next section we present a generalized form of \textsf{OWA} in order to relax this situation. The vector of weights will be in function of the vector of inputs ${\bf x}=(x_1,\ldots x_n)$. To achieve that we replace, in the OWA expression, the vector of weights by a family of functions,  called  \textbf{weighted functions}.

\section{Weighted functions}

As mentioned, the \textsf{OWA} functions are means with previously fixed weights. In the literature we can find some kind of functions which overcome this situation, by providing variable weights. This functions are called here \textit{weighted functions}. An important example of that is the Mean of Bajraktarevic, presented in \cite{Beliakov2016}, and a particular case called of \textbf{Mixture function}, which are functions that have the following form:

\begin{equation}
M({\bf x})= \frac{\sum\limits_{i=1}^{n}w_i(x_i)\cdot x_i}{\sum\limits_{i=1}^{n}w_i(x_i)}
\end{equation}

Generally, the mixture functions are not aggregation functions, since they do not always satisfy monotonicity, however the references \cite{Pereira2003,Messiar2006,Mesiar2008} provide sufficient conditions to overcome this situation.

\begin{rmk}
	Note at equation 1 that each weight $w_i(x_i)$ is the value of single variable function; namely the weight is the value of a function $w_i$ applied to the $i$-th position of the input vector ${\bf x}=(x_1,\ldots, x_n)$. However, this restriction can be relaxed in order to obtain a weight $w_i({\bf x})$, i.e., the weight is in function of the whole input. This generalization of mixture operators was done by Pereira \cite{Pereira1999,Pereira2000} and the resulting functions were called of {\bf Generalized Mixture Functions (GMF)}. 
	
	Although Pereira has introduced GMFs he did not provided a deep investigation about them. In references \cite{Pereira1999,Pereira2000,Pereira2003} studied only {\bf GMF}'s depending on a single variable, that is, functions of the form $W({\bf x})=\sum\limits_{i=1}^n w_i(x_i)\cdot x_i$. In what follows we provide some results about such {\bf GMF}s; its relation to \textsf{OWA}'s, Mixture Functions and Pre-aggregations. We finally generalize GMF`s to the notion of {\bf Bounded Generalized Mixture Functions (BGMF)} and provide some relations to the notions of monotonicity, directional monotonicity, Weak-dual and Weak-conjugate functions.
\end{rmk}

\subsection{Weighted Averaging Functions}

\hfill\
\begin{defi}
	A finite family of functions  $\Gamma=\{f_i:[0,1]^n\rightarrow [0,1]\ |\ 1\le i\le n\}$ such that $\sum\limits_{i=1}^n f_i({\bf x})=1$ is called {\bf family of weight-functions (FWF)}. A {\bf Generalized Mixture Function} or simply \textsf{GM} associated to a FWF $\Gamma$ is a function $\textsf{GM}_{\Gamma}:[0,1]^n\rightarrow [0,1]$ given by:
	$$\textsf{GM}_{\Gamma}({\bf x})=\sum_{i=1}^{n} f_i({\bf x})\cdot x_i$$
\end{defi}

\begin{exem}\hfill
	\begin{enumerate}
		\item The \textsf{GM} operator associated to $\Gamma=\left\{ f_i({\bf x})=\frac{1}{n}| \ 1\leq i\leq n\right\}$, $\textsf{GM}_{\Gamma}({\bf x})$, is $Arith({\bf x})$;
		\item The function {\it Minimum} can be obtained from $\Gamma=\{ f_i\ |\ 1\le i\le n\}$, where for all ${\bf x}\in [0,1]^n$, $f_{(n)}({\bf x})=1$ and $f_i({\bf x}) = 0$, if $i\ne (n)$, where $(\cdot):\{1,\cdots,n\}\rightarrow \{1,\cdots,n\}$ is a permutation such that $sort({\bf x})=(x_{(1)},\cdots,x_{(n)})$;
		\item Similarly, the function {\it Maximum} is also of type \textsf{GM} with $\Gamma$ dually defined.
	\end{enumerate}
\end{exem}

\begin{teo}
	For any vector of weights ${\bf w}=(w_1,w_2,...,w_n)$, the function $\textsf{OWA}_{\bf w}$ is a \textsf{GM} function.
\end{teo}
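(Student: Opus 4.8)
The plan is to exhibit an explicit family of weight-functions $\Gamma$ whose associated \textsf{GM} operator coincides with $\textsf{OWA}_{\bf w}$. Recall that $\textsf{OWA}_{\bf w}({\bf x})=\sum_{i=1}^n w_i\cdot x_{(i)}$, where $(x_{(1)},\dots,x_{(n)})=sort({\bf x})$ satisfies $x_{(1)}\ge\cdots\ge x_{(n)}$. The whole point is that for each fixed ${\bf x}$ this reordering is realized by a permutation of the indices, and that permutation can be absorbed into the coefficients $w_i$, turning them into functions of ${\bf x}$.

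Concretely, I would fix for every ${\bf x}\in[0,1]^n$ a permutation $\sigma_{\bf x}:\{1,\dots,n\}\to\{1,\dots,n\}$ with $x_{\sigma_{\bf x}(1)}\ge\cdots\ge x_{\sigma_{\bf x}(n)}$, so that $x_{(k)}=x_{\sigma_{\bf x}(k)}$ for all $k$. Since such a permutation need not be unique when ${\bf x}$ has repeated entries, I would make the choice deterministic, e.g.\ by breaking ties in favour of the smallest original index; this guarantees that ${\bf x}\mapsto\sigma_{\bf x}$ is a well-defined map. I would then define
$$f_i({\bf x})=w_{\sigma_{\bf x}^{-1}(i)},\qquad i=1,\dots,n.$$
Each $f_i$ maps $[0,1]^n$ into $[0,1]$, because its values lie in $\{w_1,\dots,w_n\}\subseteq[0,1]$, and
$$\sum_{i=1}^n f_i({\bf x})=\sum_{i=1}^n w_{\sigma_{\bf x}^{-1}(i)}=\sum_{j=1}^n w_j=1,$$
so $\Gamma=\{f_i\mid 1\le i\le n\}$ is indeed a \textbf{FWF}.

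Finally I would check the identity $\textsf{GM}_{\Gamma}=\textsf{OWA}_{\bf w}$ by direct computation, using the substitution $j=\sigma_{\bf x}^{-1}(i)$ (equivalently $i=\sigma_{\bf x}(j)$), under which $i$ and $j$ both range over $\{1,\dots,n\}$:
$$\textsf{GM}_{\Gamma}({\bf x})=\sum_{i=1}^n f_i({\bf x})\,x_i=\sum_{i=1}^n w_{\sigma_{\bf x}^{-1}(i)}\,x_i=\sum_{j=1}^n w_j\,x_{\sigma_{\bf x}(j)}=\sum_{j=1}^n w_j\,x_{(j)}=\textsf{OWA}_{\bf w}({\bf x}).$$
The only delicate point is the non-uniqueness of the sorting permutation on vectors with ties; I do not expect this to be a genuine obstacle, since any fixed tie-breaking rule produces an admissible $\Gamma$ (and the \textsf{OWA} value itself is insensitive to the choice, as exchanging two equal coordinates leaves $\sum_k w_k x_{(k)}$ unchanged), so a single such rule suffices to conclude that $\textsf{OWA}_{\bf w}$ is of type \textsf{GM}. $\qed$
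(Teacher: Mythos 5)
Your proof is correct and follows essentially the same route as the paper: you define $f_i({\bf x})=w_{\sigma_{\bf x}^{-1}(i)}$ using the inverse of the sorting permutation, exactly as the paper does with its $f_i({\bf x})=w_{p(i)}$ where $p=q^{-1}$ and $q(i)=(i)$, and then reindex the sum. You are somewhat more careful than the paper in fixing a deterministic tie-breaking rule and in explicitly verifying the FWF condition $\sum_i f_i({\bf x})=1$, but the underlying argument is the same.
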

\begin{proof}
	Define $f_i({\bf x})= w_{p(i)}$, where $p:\{1,2,\cdots, n\}\longrightarrow \{1,2,\cdots,n\}$ is the inverse permutation of $q(i)=(i)$. Then,
	\begin{eqnarray*}
		\textsf{GM}_{\Gamma}(x_1,\cdots,x_n) & = & \sum\limits_{i=1}^nf_i(x_1,\cdots,x_n)\cdot x_i\\
		& = & \sum\limits_{i=1}^n w_{p(i)}x_i
	\end{eqnarray*}
	but as $p$ is the inverse of $q$, it follows that $q(p(i))=i$, that is, $(p(i))=i$. Thus,
	$$\textsf{GM}_{\Gamma}(x_1,\cdots,x_n)= \sum\limits_{i=1}^n w_{p(i)}x_{(p(i))}$$
	Making the necessary changes in this sum, we have:
	\begin{eqnarray*}
		\textsf{GM}_{\Gamma}(x_1,\cdots,x_n) & = & \sum\limits_{i=1}^n w_ix_{(i)}\\
		& = & \textsf{OWA}_{\bf w}(x_1,\cdots,x_n)
	\end{eqnarray*}
\end{proof}

\begin{exem}
	If ${\bf w}=(0.3,0.4,0.3)$, then for ${\bf x}=(0.1,1.0,0.9)$ we have $x_1=x_{(3)},\ x_2=x_{(1)}$ and $x_3=x_{(2)}$. Thus, $f_1({\bf x})=0.3,\ f_2({\bf x})=0.3$, $f_3({\bf x})=0.4$, and $\textsf{GM}({\bf x})=0.3\cdot 0.1+ 0.3\cdot1.0 + 0.4\cdot 0.9=0.69$.
\end{exem}

\newpage
Note that, by Theorem 1, any \textsf{OWA} function is \textsf{GM}. However, there are \textsf{GM} functions which are not \textsf{OWA}:

\begin{exem}
	Let $\Gamma=\{\sin(x)\cdot y, 1-\sin(x)\cdot y\}$. The respective \textsf{GM} function is $\textsf{GM}(x,y)=(\sin(x)\cdot y)\cdot x+(1-\sin(x)\cdot y)\cdot y$, which is not an \textsf{OWA} function.
\end{exem}

\begin{prop}
	Mixture functions are a particular case of \textsf{GM}.
\end{prop}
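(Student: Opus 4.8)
The plan is to exhibit, for a given mixture function $M$ with single-variable weight functions $w_1,\dots,w_n$, an explicit family of weight-functions $\Gamma=\{f_1,\dots,f_n\}$ whose associated \textsf{GM} function coincides with $M$. The natural candidate is to normalise the weights pointwise: set
$$f_i({\bf x})=\frac{w_i(x_i)}{\sum\limits_{j=1}^{n}w_j(x_j)},\qquad 1\le i\le n.$$
First I would record that, for $M$ to be well defined as a mixture function in the sense of equation (1), one assumes $w_j:[0,1]\rightarrow[0,1]$ with $\sum_{j=1}^n w_j(t_j)>0$ for every $(t_1,\dots,t_n)\in[0,1]^n$ (otherwise the quotient in (1) is meaningless). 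Under this standing hypothesis each $f_i$ is a genuine function $[0,1]^n\rightarrow[0,1]$: the numerator is nonnegative and is bounded above by the denominator, so $0\le f_i({\bf x})\le 1$.

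Next I would verify that $\Gamma=\{f_i:[0,1]^n\rightarrow[0,1]\mid 1\le i\le n\}$ is a family of weight-functions (FWF) in the sense of the definition above, i.e.\ that $\sum_{i=1}^n f_i({\bf x})=1$ for all ${\bf x}$. This is immediate:
$$\sum_{i=1}^{n}f_i({\bf x})=\sum_{i=1}^{n}\frac{w_i(x_i)}{\sum_{j=1}^{n}w_j(x_j)}=\frac{\sum_{i=1}^{n}w_i(x_i)}{\sum_{j=1}^{n}w_j(x_j)}=1.$$

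Finally I would compute the \textsf{GM} function associated to $\Gamma$ and check it equals $M$:
$$\textsf{GM}_{\Gamma}({\bf x})=\sum_{i=1}^{n}f_i({\bf x})\cdot x_i=\sum_{i=1}^{n}\frac{w_i(x_i)\cdot x_i}{\sum_{j=1}^{n}w_j(x_j)}=\frac{\sum_{i=1}^{n}w_i(x_i)\cdot x_i}{\sum_{i=1}^{n}w_i(x_i)}=M({\bf x}),$$
which is the desired identity. The argument is essentially a one-line normalisation; the only point that requires care — and hence the ``main obstacle'', such as it is — is making sure the $f_i$ are legitimate maps into $[0,1]$, which forces one to state explicitly the non-degeneracy assumption on the denominator $\sum_j w_j(x_j)$ that is already implicit in the definition of a mixture function.
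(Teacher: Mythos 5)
Your proof is correct and follows exactly the paper's approach: the paper likewise defines $f_i({\bf x})=\frac{w_i(x_i)}{\sum_{j=1}^n w_j(x_j)}$ and observes that the resulting \textsf{GM} function recovers $M$. Your version simply spells out the FWF verification and the non-degeneracy of the denominator, which the paper leaves implicit.
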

\begin{proof}
	A mixture operator $M({\bf x})$ can be see as a \textsf{GM} function, with weight-functions given by $f_i({\bf x})=\frac{w_i(x_i)}{\sum\limits_{i=1}^n w_i(x_i)}$.
\end{proof}

\begin{rmk}
	The \textsf{GM} function at Example 6 cannot be characterized as a mixture function, since $w_1$ is not a function which depends only on variable $x$ and $w_2$ is not a function which depends only on variable $y$.
\end{rmk}

Below, we propose a new generalized form of \textsf{GM}, which is obtained by relaxing the condition $\sum\limits_{i = 1}^nf_i({\bf x}) = 1$ to $\sum\limits_{i = 1}^nf_i({\bf x})\le 1$. This new family is called Bounded Generalized Mixture functions. 

\begin{defi}
	
	A family of {\bf weak weight-functions (wFWF)} is a finite family of functions\linebreak $\Gamma=\left\{f_i:[0,1]^n\rightarrow [0,1] | \ 1\le i\le n\right\}$ such that:
	\begin{enumerate}
		\item[(I)] $\sum\limits_{i=1}^n f_i({\bf x})\le1$, and
		\item[(II)] $\sum\limits_{i=1}^n f_i(1,\cdots,1)=1$, for all $i\in\{1,2,\cdots,n\}$.
	\end{enumerate} 
	
	A {\bf Bounded Ganeralized Mixture} (\textsf{BGM}) operator associated to a wFWF $\Gamma$ is a function $\textsf{BGM}_{\Gamma}:[0,1]^n\rightarrow [0,1]$ given by:
	$$\textsf{BGM}_{\Gamma}({\bf x})=\sum_{i=1}^{n} f_i({\bf x})\cdot x_i$$
\end{defi}

\begin{rmk}
	\hfill
	\begin{enumerate}
		\item Since \textsf{BGM} is a generalized form of \textsf{GM}, then both \textsf{OWA}s and mixture functions are instances of \textsf{BGM}. Moreover, \textsf{GM} functions are \textsf{BGM} operators subject to the condition: 
		\begin{enumerate}
			\item[(III)] $\sum\limits_{i=1}^n f_i({\bf x})=1$, for any ${\bf x} \in [0,1]^n$,
		\end{enumerate}
		\item Let $\Gamma =\left\{ f_i(x,y)=\frac{x}{n}|\ 1\le i\le n\right\}$. Then, $\textsf{BGM}_{\Gamma}({\bf x})=\sum\limits_{i=1}^n \frac{x_i^2}{n}$ is not a \textsf{GM} operator.
	\end{enumerate}
\end{rmk}

\subsection{Properties of \textsf{GM} and \textsf{BGM} functions}\hfill

Although \textsf{GM} and \textsf{BGM} are generalized forms of \textsf{OWA}, we cannot always guarantee that a \textsf{BGM} is an averaging function, although \textsf{GM} functions are averaging function. The next proposition gives us a sufficient condition to achieve that.

\begin{prop}
	If $\Gamma$ is a FWF, then $\textsf{GM}_{\Gamma}$ is averaging, i.e.:
	
	$$Min({\bf x})\le \textsf{GM}_{\Gamma}({\bf x}) \le Max({\bf x})$$
\end{prop}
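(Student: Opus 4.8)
The plan is to recognize $\textsf{GM}_{\Gamma}({\bf x})$ as a convex combination of the coordinates $x_1,\dots,x_n$ and then apply the elementary fact that a convex combination of real numbers lies between their minimum and their maximum. First I would fix an arbitrary ${\bf x}=(x_1,\dots,x_n)\in[0,1]^n$ and abbreviate $m=Min({\bf x})$ and $M=Max({\bf x})$, so that by definition $m\le x_i\le M$ for every $i\in\{1,\dots,n\}$.

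The key step uses the two defining features of an FWF. Since each $f_i$ takes values in $[0,1]$, we have $f_i({\bf x})\ge 0$, so multiplying the inequalities $m\le x_i\le M$ by the non-negative scalar $f_i({\bf x})$ preserves their direction:
$$f_i({\bf x})\cdot m\ \le\ f_i({\bf x})\cdot x_i\ \le\ f_i({\bf x})\cdot M.$$
Summing these $n$ inequalities and factoring $m$ and $M$ out of the outer sums yields
$$m\sum_{i=1}^n f_i({\bf x})\ \le\ \sum_{i=1}^n f_i({\bf x})\cdot x_i\ \le\ M\sum_{i=1}^n f_i({\bf x}).$$

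Finally, invoking the normalization $\sum_{i=1}^n f_i({\bf x})=1$, which holds for every FWF, collapses both outer expressions and gives $m\le \textsf{GM}_{\Gamma}({\bf x})\le M$, i.e.\ $Min({\bf x})\le\textsf{GM}_{\Gamma}({\bf x})\le Max({\bf x})$, as desired. There is essentially no obstacle here; the only point worth stressing is that the argument genuinely needs both ingredients — non-negativity of the weights (to keep the scalar multiplication monotone) and the exact equality $\sum_i f_i({\bf x})=1$ rather than merely $\le 1$. This is exactly why the statement does not extend to arbitrary \textsf{BGM} operators, as already signalled by the remark preceding the proposition and by the example $\textsf{BGM}_{\Gamma}({\bf x})=\sum_{i=1}^n x_i^2/n$, which fails the lower averaging bound.
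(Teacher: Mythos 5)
Your proposal is correct and follows essentially the same route as the paper: bound each $x_i$ between $Min({\bf x})$ and $Max({\bf x})$, multiply by the non-negative weights $f_i({\bf x})$, sum, and use the normalization $\sum_{i=1}^n f_i({\bf x})=1$ to collapse the outer terms. Your additional remark about why the argument fails for general \textsf{BGM} operators matches the paper's own observation in the remark that follows the proposition.
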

\begin{proof}
	For all ${\bf x}=(x_1,...,x_n)$,
	$$Min({\bf x})\le x_i\le Max({\bf x}),\ \forall i=1,2,...,n.$$
	So,
	$$\sum\limits_{i=1}^nf_i({\bf x})\cdot Min({\bf x})\le \sum\limits_{i=1}^nf_i({\bf x})\cdot x_i\le \sum\limits_{i=1}^nf_i({\bf x})\cdot Max({\bf x}),$$
	but as $\sum\limits_{i=1}^nf_i({\bf x})= 1$, it follows that
	$$Min({\bf x})\le \sum\limits_{i=1}^nf_i({\bf x})\cdot x_i \le Max({\bf x})$$
\end{proof}

\begin{rmk} Note that FWF cannot simply be replaced by wFWF, since for $f_1(x,y)=\frac{x}{2}$ and $f_2(x,y)=\frac{y}{2}$, we have  $\textsf{BGM}(0.5, 0.5)=0.25<Min(0.5, 0.5)$, although we can guarantee that:
	$$\textsf{BGM}_{\Gamma}({\bf x})\le Max({\bf x})$$
\end{rmk}

\begin{prop}
	Let $\Gamma$ be a wFWF. Then, the $\textsf{BGM}_{\Gamma}$ is idempotent if, and only, if $\sum\limits_{i=1}^n f_i(x,\cdots,x)=1$ for any $x\in[0,1]$.
\end{prop}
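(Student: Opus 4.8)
The plan is to evaluate $\textsf{BGM}_\Gamma$ on a diagonal (constant) input and to factor $x$ out of the resulting sum. By the definition of $\textsf{BGM}_\Gamma$, for every $x\in[0,1]$,
$$\textsf{BGM}_\Gamma(x,\ldots,x)=\sum_{i=1}^{n}f_i(x,\ldots,x)\cdot x = x\cdot\sum_{i=1}^{n}f_i(x,\ldots,x).$$
Both implications follow from this single identity, so once it is in place there is essentially nothing left but a cancellation.

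For the ($\Leftarrow$) direction, suppose $\sum_{i=1}^{n}f_i(x,\ldots,x)=1$ for all $x\in[0,1]$. Substituting into the identity above gives $\textsf{BGM}_\Gamma(x,\ldots,x)=x\cdot 1=x$ for every $x\in[0,1]$, which is precisely idempotency.

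For the ($\Rightarrow$) direction, suppose $\textsf{BGM}_\Gamma$ is idempotent, so that $x\cdot\sum_{i=1}^{n}f_i(x,\ldots,x)=x$ for all $x\in[0,1]$. For $x\neq 0$ we divide both sides by $x$ and obtain $\sum_{i=1}^{n}f_i(x,\ldots,x)=1$. The one delicate spot is $x=0$: there idempotency only asserts $0=0$ and therefore carries no information about $\sum_{i=1}^{n}f_i(0,\ldots,0)$, so I would either read the statement with $x$ ranging over $]0,1]$, or recover the value at $0$ from an extra regularity hypothesis on $\Gamma$ (continuity of $\sum_i f_i$ along the diagonal at the origin forces $\sum_{i=1}^{n}f_i(0,\ldots,0)=\lim_{x\to 0^+}\sum_{i=1}^{n}f_i(x,\ldots,x)=1$). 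Apart from this endpoint bookkeeping I do not expect any obstacle: the whole argument amounts to the observation that $x$ factors out of $\textsf{BGM}_\Gamma(x,\ldots,x)$ together with one cancellation, so the ``hard part'' is only making the $x=0$ case rigorous.
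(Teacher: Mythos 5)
Your proof is correct and follows essentially the same route as the paper's: factor $x$ out of $\textsf{BGM}_\Gamma(x,\ldots,x)$ and cancel. Your worry about the endpoint $x=0$ is well founded rather than mere bookkeeping --- the paper's own argument for the ``only if'' direction claims $\sum_{i=1}^n f_i(x,\ldots,x)\cdot x < x\cdot 1 = x$ whenever $\sum_{i=1}^n f_i(x,\ldots,x)<1$, which fails at $x=0$ (one can build a wFWF with $\sum_i f_i(0,\ldots,0)<1$ whose $\textsf{BGM}$ is still idempotent), so the statement genuinely needs $x$ restricted to $]0,1]$ or an extra regularity hypothesis, exactly as you note.
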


\begin{proof}
	If $\sum\limits_{i=1}^n f_i({\bf x})=1$ and ${\bf x}=(x,...,x)$, then:
	$$\textsf{BGM}_{\Gamma}({\bf x})=\sum\limits_{i=1}^nf_i({\bf x})\cdot x=x\cdot\sum\limits_{i=1}^nf_i({\bf x})=x$$
	
	Reciprocally, if $\textsf{BGM}_{\Gamma}$ is a idempotent function and $\sum\limits_{i=1}^n f_i(x,\cdots,x)<1$ for some $x\in[0,1]$ we have to
	$$\textsf{BGM}_{\Gamma}({\bf x})=\sum\limits_{i=1}^nf_i({\bf x})\cdot x<x\cdot 1=x.$$
	Thus, the condition $\sum\limits_{i=1}^n f_i(x,\cdots,x)=1$ cannot be removed.
\end{proof}

\begin{cor}
	Any \textsf{GM} function is idempotent.
\end{cor}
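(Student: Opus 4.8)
The plan is to derive this immediately from Proposition 4, which already characterizes idempotency of \textsf{BGM} operators. First I would recall that, by the Remark following the definition of \textsf{BGM}, every \textsf{GM} function is a \textsf{BGM} operator whose family $\Gamma$ additionally satisfies condition (III), namely $\sum_{i=1}^n f_i({\bf x})=1$ for every ${\bf x}\in[0,1]^n$. In particular, specializing ${\bf x}=(x,\dots,x)$ for an arbitrary $x\in[0,1]$ gives $\sum_{i=1}^n f_i(x,\dots,x)=1$, which is precisely the hypothesis of the ``if'' direction of Proposition 4. Hence $\textsf{GM}_{\Gamma}$ is idempotent.

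Alternatively, and perhaps cleaner to present, I would give the one-line direct computation: since $\Gamma$ is a FWF, $\sum_{i=1}^n f_i(x,\dots,x)=1$, so
$$\textsf{GM}_{\Gamma}(x,\dots,x)=\sum_{i=1}^n f_i(x,\dots,x)\cdot x = x\cdot\sum_{i=1}^n f_i(x,\dots,x)=x.$$
There is essentially no obstacle here: the corollary is a direct specialization of the preceding proposition (or of the defining property of a FWF), so the only thing to be careful about is citing the right condition — that FWF forces the weight-functions to sum to $1$ at \emph{every} point, including the diagonal points $(x,\dots,x)$ — rather than just at $(1,\dots,1)$, which is all that the weaker wFWF condition (II) would guarantee.
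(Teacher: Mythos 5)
Your proposal is correct and matches the paper's intent: the paper states this as an unproved corollary of Proposition~4, and your argument (a FWF satisfies $\sum_{i=1}^n f_i(x,\dots,x)=1$ at every diagonal point, so the ``if'' direction of Proposition~4 applies) is exactly the intended derivation. The direct one-line computation you add is the same calculation already carried out in the first half of the paper's proof of Proposition~4.
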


\begin{rmk}
	The \textsf{BGM} function described in Remark 6, is not idempotent, since $BGM(0.5,0.0)=0.25$. Thus, we cannot always guarantee that a \textsf{BGM} is idempotent
\end{rmk}

\begin{prop}
	If $\Gamma$ is a FWF invariant under translations\footnote{This means that $f_i(x_1+\lambda,x_2+\lambda,...,x_n+\lambda)=f_i(x_1,x_2,...,\linebreak x_n)$ for any ${\bf x}\in [0,1]^n$,  for $i\in\{1,2,\cdots,n\}$ and $\lambda\in[-1,1]$ such that $(x_1+\lambda,x_2+\lambda,...,x_n+\lambda)\in[0,1]^n$.}, then $\textsf{GM}_{\Gamma}$ is shift-invariant.
\end{prop}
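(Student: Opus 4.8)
The plan is a direct computation: expand $\textsf{GM}_{\Gamma}$ at the translated vector and use the two hypotheses on $\Gamma$ (translation invariance of each $f_i$, and the normalization $\sum_i f_i = 1$ built into the definition of FWF). Concretely, fix ${\bf x}=(x_1,\dots,x_n)\in[0,1]^n$ and $r\in[-1,1]$ such that $(x_1+r,\dots,x_n+r)\in[0,1]^n$ and $\textsf{GM}_{\Gamma}({\bf x})+r\in[0,1]$, exactly as required in the definition of shift-invariance. First I would write
$$\textsf{GM}_{\Gamma}(x_1+r,\dots,x_n+r)=\sum_{i=1}^n f_i(x_1+r,\dots,x_n+r)\cdot (x_i+r).$$

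Next, I would invoke translation invariance of $\Gamma$ (the footnoted property) to replace $f_i(x_1+r,\dots,x_n+r)$ by $f_i({\bf x})$ for each $i$, which is legitimate precisely because $(x_1+r,\dots,x_n+r)$ lies in $[0,1]^n$. Then I would split the sum:
$$\sum_{i=1}^n f_i({\bf x})\cdot(x_i+r)=\sum_{i=1}^n f_i({\bf x})\cdot x_i+r\cdot\sum_{i=1}^n f_i({\bf x}).$$

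Finally, since $\Gamma$ is a FWF we have $\sum_{i=1}^n f_i({\bf x})=1$, so the right-hand side equals $\textsf{GM}_{\Gamma}({\bf x})+r$, which is the claimed identity $\textsf{GM}_{\Gamma}(x_1+r,\dots,x_n+r)=\textsf{GM}_{\Gamma}({\bf x})+r$. There is no real obstacle here; the only point deserving a word of care is bookkeeping of the domain conditions — one must note that the shift-invariance property is only asserted for those $r$ for which both the shifted input and the shifted output stay in $[0,1]$, so the application of translation invariance of the $f_i$ is always on admissible arguments. Everything else is routine algebra.
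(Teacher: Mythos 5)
Your proof is correct and follows essentially the same route as the paper's: expand $\textsf{GM}_{\Gamma}$ at the translated input, use translation invariance of the $f_i$ to replace $f_i(x_1+r,\dots,x_n+r)$ by $f_i({\bf x})$, split the sum, and apply $\sum_{i=1}^n f_i({\bf x})=1$. Your explicit remark on the domain conditions is a minor point of extra care the paper leaves implicit, but the argument is the same.
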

\begin{proof}
	Let ${\bf x}=(x_1,...,x_n)\in[0,1]^n$ and $\lambda\in[-1,1]$ such that ${\bf y}=(x_1+\lambda,x_2+\lambda,...,x_n+\lambda)\in[0,1]^n$. then,
	\begin{eqnarray*}
		\textsf{GM}_{\Gamma}({\bf y})
		& = & \sum\limits_{i=1}^n f_i({\bf y})\cdot(x_i+\lambda)\\
		& = & \sum\limits_{i=1}^nf_i({\bf y})\cdot x_i +\ \sum\limits_{i=1}^nf_i({\bf y})\cdot\lambda\\
		& = & \sum\limits_{i=1}^nf_i({\bf x})\cdot x_i+\lambda\\
		& = & \textsf{GM}_{\Gamma}({\bf x})+\lambda
	\end{eqnarray*}
\end{proof}

\begin{rmk}
	The condition FWF is also important to preserve shift-invariance, since if we define $f_1(x,y)=f_2(x,y)=\frac{|x-y|}{2}$, for $(x,y)\ne (1,1)$, and $f_1(1,1)=f_2(1,1)=\frac{1}{2}$, then $f_1$ and $f_2$ are invariant under translations, but $\textsf{BGM}(0,0.1)=0.005$ and $\textsf{BGM}(0+0.1,0.1+0.1)=0.015\ne 0.005+0.1$.
\end{rmk}

\begin{prop}
	If $\Gamma$ is a wFWF and each $f_i\in \Gamma$ is homogeneus of order $k$, then $\textsf{BGM}_{\Gamma}({\bf x})$ is homogeneous of order $k+1$.
\end{prop}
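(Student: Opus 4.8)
The plan is to proceed by a direct substitution into the defining sum of $\textsf{BGM}_{\Gamma}$, exactly mirroring the computation used in Proposition~5 for shift-invariance. First I would fix an arbitrary $\lambda\in[0,1]$ and an arbitrary ${\bf x}=(x_1,\dots,x_n)\in[0,1]^n$, and note that $(\lambda x_1,\dots,\lambda x_n)\in[0,1]^n$, so that $\textsf{BGM}_{\Gamma}(\lambda x_1,\dots,\lambda x_n)$ is well defined; the wFWF hypothesis guarantees $\sum_{i=1}^n f_i(\lambda x_1,\dots,\lambda x_n)\le 1$, so no range issue arises.

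Next I would expand
$$\textsf{BGM}_{\Gamma}(\lambda x_1,\dots,\lambda x_n)=\sum_{i=1}^n f_i(\lambda x_1,\dots,\lambda x_n)\cdot(\lambda x_i),$$
apply the hypothesis that each $f_i$ is homogeneous of order $k$ to replace $f_i(\lambda x_1,\dots,\lambda x_n)$ by $\lambda^k f_i({\bf x})$, and then pull the scalar $\lambda^{k+1}$ out of the sum:
$$\sum_{i=1}^n \lambda^k f_i({\bf x})\cdot\lambda x_i=\lambda^{k+1}\sum_{i=1}^n f_i({\bf x})\cdot x_i=\lambda^{k+1}\,\textsf{BGM}_{\Gamma}({\bf x}).$$
This is the entire argument; it uses nothing beyond the definition of $\textsf{BGM}_{\Gamma}$ and of homogeneity of order $k$.

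There is essentially no hard step here: the only point requiring a word of care is confirming that all quantities stay inside $[0,1]$ (which follows because $\lambda\in[0,1]$, $f_i({\bf x})\in[0,1]$, and the wFWF condition bounds the sum of the weights), and noting that the argument is valid for wFWF, hence a fortiori for FWF, so the analogous statement for $\textsf{GM}_{\Gamma}$ follows as an immediate corollary. I would close by remarking that, unlike the shift-invariance case (Remark~5), no extra condition on $\Gamma$ beyond being a wFWF is needed, since the factor $\lambda$ multiplies each coordinate uniformly and so interacts multiplicatively — rather than additively — with the weights.
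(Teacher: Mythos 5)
Your argument is correct and is essentially identical to the paper's own proof: expand the defining sum at $(\lambda x_1,\dots,\lambda x_n)$, use homogeneity of order $k$ of each $f_i$, and factor out $\lambda^{k+1}$. The only cosmetic difference is that the paper dispatches $\lambda=0$ as a separate trivial case before running the same computation for $\lambda\neq 0$.
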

\begin{proof}
	The case $\lambda=0$ is trivial. Now, to $\lambda \ne 0$ we have:
	\begin{eqnarray*}
		\textsf{BGM}_{\Gamma}(\lambda x_1,...,\lambda x_n) & = & \sum\limits_{i=1}^nf_i(\lambda x_1,...,\lambda x_n)\cdot\lambda x_i\\
		& = & \lambda\cdot \sum\limits_{i=1}^n\lambda^kf_i(x_1,...,x_n)x_i\\
		& = & \lambda^{k+1} \cdot \textsf{BGM}_{\Gamma}(x_1,...,x_n)
	\end{eqnarray*}
\end{proof}

\begin{rmk}
	Note that if $\Gamma$ is a FWF, i.e, $\sum\limits_{i=1}^n f_i({\bf x})=1$, then $f_i$ cannot be homogeneous of order $k> 0$, since
	$$1=\sum\limits_{i=1}^nf_i(\lambda x_1,\cdots,\lambda x_n)=\lambda^k \sum\limits_{i=1}^n f_i({\bf x}) = \lambda^k,$$
	i.e., there are no \textsf{GM}'s homogeneous of order $k>1$. However, if we remove this restriction, then we can have $\Gamma$ with homogeneous $f_i$s of order $k> 0$. For example, $f_i({\bf x})=\frac{x_i}{n}$ is homogeneous of order $1$, and so, according to Proposition 5, $\textsf{BGM}_{\Gamma}$ is homogeneus of order $2$.
\end{rmk}

The next example shows a \textsf{GM} function which is not a mixture operator, since $f_i$ does not depend exclusively from $x_i$. This $\textsf{GM}_{\Gamma}$ is idempotent, homogeneous and shift-invariant, but is not monotonic, since $\textsf{GM}_{\Gamma}(0.5,0.2,0.1)=0.375$ and\linebreak $\textsf{GM}_{\Gamma}(0.5,0.22,0.2)=0.368$.

\begin{exem}
	The family $\Gamma$ defined by
	$$f_i(x_1,...,x_n)=\left\{\begin{array}{ll}
	\frac{1}{n}, & \mbox{ if } x_1=\cdots=x_n=0\\
	\frac{x_i}{\sum\limits_{j=1}^n x_j}, & \mbox{otherwise}
	\end{array}\right.$$
	if a FWF and, $$\textsf{GM}_{\Gamma}({\bf x})=\left\{\begin{array}{ll}
	0, & \mbox{ if } x_1,...,x_n=0\\
	\frac{\sum\limits_{i=1}^n x_i^2}{\sum\limits_{i=1}^n x_i}, & \mbox{otherwise} \end{array}\right.$$
\end{exem}

\begin{prop}
	The $N$-dual\footnote{The $N$-dual of a function $F:[0,1]^n\longrightarrow [0,1]$ is $F^N(x_1,\cdots,x_n)=N(F(N(x_1),\cdots,N(x_n))$, where $N$ is a fuzzy negation, i.e., a decreasing function $N:[0,1]\longrightarrow[0,1]$ with $N(0)=1$ and $N(1)=0$.}, with respect to stantard fuzzy negation\footnote{The standard fuzzy negation if $N(x)=1-x$}, of a \textsf{GM} function is also a \textsf{GM} function.
\end{prop}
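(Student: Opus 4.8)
The plan is to write the $N$-dual explicitly and then recognize it as a $\textsf{GM}$ function associated to a new, suitably reflected, family of weight-functions. Let $\Gamma=\{f_i:[0,1]^n\rightarrow[0,1]\mid 1\le i\le n\}$ be a FWF and let $N(x)=1-x$ be the standard fuzzy negation. For $\mathbf{x}=(x_1,\dots,x_n)\in[0,1]^n$ write $\mathbf{1-x}=(1-x_1,\dots,1-x_n)$, which again lies in $[0,1]^n$. By the definition of the $N$-dual,
\[
\textsf{GM}_{\Gamma}^{N}(\mathbf{x})\;=\;1-\textsf{GM}_{\Gamma}(\mathbf{1-x})\;=\;1-\sum_{i=1}^n f_i(\mathbf{1-x})\,(1-x_i).
\]

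First I would expand the sum, splitting $(1-x_i)$ into $1$ and $-x_i$, to obtain
\[
\textsf{GM}_{\Gamma}^{N}(\mathbf{x})=1-\sum_{i=1}^n f_i(\mathbf{1-x})+\sum_{i=1}^n f_i(\mathbf{1-x})\,x_i.
\]
The crucial step is then to invoke that $\Gamma$ is a FWF: the identity $\sum_{i=1}^n f_i(\mathbf{y})=1$ holds for \emph{every} $\mathbf{y}\in[0,1]^n$, in particular for $\mathbf{y}=\mathbf{1-x}$. Hence the first two terms cancel and
\[
\textsf{GM}_{\Gamma}^{N}(\mathbf{x})=\sum_{i=1}^n f_i(\mathbf{1-x})\,x_i.
\]

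To finish, I would define $g_i:[0,1]^n\rightarrow[0,1]$ by $g_i(\mathbf{x})=f_i(\mathbf{1-x})=f_i(N(x_1),\dots,N(x_n))$; these are well defined with codomain $[0,1]$ because the map $\mathbf{x}\mapsto\mathbf{1-x}$ sends $[0,1]^n$ into itself and each $f_i$ takes values in $[0,1]$. Moreover $\sum_{i=1}^n g_i(\mathbf{x})=\sum_{i=1}^n f_i(\mathbf{1-x})=1$, so $\Gamma'=\{g_i\mid 1\le i\le n\}$ is again a FWF, and the last display reads exactly $\textsf{GM}_{\Gamma}^{N}(\mathbf{x})=\textsf{GM}_{\Gamma'}(\mathbf{x})$. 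Therefore the $N$-dual of the $\textsf{GM}$ function $\textsf{GM}_{\Gamma}$ is the $\textsf{GM}$ function associated to $\Gamma'$, as claimed.

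I do not expect any real obstacle here. The only point that requires a moment's care is that the FWF normalization must be used at the reflected argument $\mathbf{1-x}$ rather than at $\mathbf{x}$ — this is precisely what makes the constant term $1-\sum_i f_i(\mathbf{1-x})$ vanish — and it is also the place where restricting to the standard negation is what keeps the computation clean; the rest is routine bookkeeping.
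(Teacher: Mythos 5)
Your proof is correct and follows essentially the same route as the paper's: expand $1-\sum_i f_i(\mathbf{1-x})(1-x_i)$, cancel the constant term using the FWF normalization at the reflected argument, and recognize the result as $\textsf{GM}_{\Gamma'}$ with $g_i(\mathbf{x})=f_i(\mathbf{1-x})$. Your version is in fact slightly more careful, since you explicitly verify that $\Gamma'$ is again a FWF, a point the paper leaves implicit.
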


\begin{proof}
	If $\Gamma$ is a wFWF, then
	\begin{eqnarray*}
		\textsf{GM}_{\Gamma}^N(x_1,\cdots,x_n)
		& = & 1-\sum\limits_{i=1}^n f_i(1-x_1,\cdots,1-x_n)\cdot (1-x_i)\\
		& = & 1-\sum\limits_{i=1}^n f_i(1-x_1,\cdots,1-x_n)+ \sum\limits_{i=1}^n f_i(1-x_1,\cdots,1-x_n)\cdot x_i\\
		& = & \sum\limits_{i=1}^n f_i(1-x_1,\cdots,1-x_n)\cdot x_i\\
		& = & \sum\limits_{i=1}^n g_i(x_1,\cdots,x_n)\cdot x_i,
	\end{eqnarray*}
	where $g_i(x_1,\cdots,x_n)= f_i(1-x_1,\cdots,1-x_n)$.
\end{proof}

\begin{rmk}\hfill
	\begin{enumerate}
		\item The $N$-dual of a \textsf{BGM} with respect to stantard fuzzy negation it will not be a \textsf{BGM} function, but it will be of the form
		$$\textsf{BGM}_{\Gamma}^N({\bf x}) = \textsf{BGM}_{\Gamma'}({\bf x})+h({\bf x}),$$
		where $h:[0,1]^n\rightarrow[0,1]$ is given by $h({\bf x})=1-\sum\limits_{1}^n f_i({\bf x})$.
		\item The dual of a \textsf{GM} function will not always be a \textsf{GM} function, for example: If $N(x)=1-x^n$, then
		\begin{eqnarray*}
			\textsf{GM}_{\Gamma}^N(x_1,\cdots,x_n) &= & \sum\limits_{i=1}^n f_i(1-x_1^n,\cdots,1-x_n^n)\cdot x_i^n\\
			& = &\sum\limits_{i=1}^n\left( f_i(1-x_1^n,\cdots,1-x_n^n)\cdot x_i^{n-1}\right)\cdot x_i
		\end{eqnarray*}
	and $\Gamma'=\left\{g_i({\bf x})= f_i(1-x_1^n,\cdots,1-x_n^n)\cdot x_i^{n-1}|\ 1\le i\le n \right\}$ not is a FWF, but is a wFWF. So, this $N$-dual is a \textsf{BGM} function.
	\end{enumerate}
\end{rmk}

This provides a motivation to define the {\bf weak dual} of a \textsf{GM} function, as follow:
	
	\begin{defi}
		If $\textsf{GM}_{\Gamma}$ is a generalized mixture function, then the {\bf weak dual} of a $\textsf{GM}_{\Gamma}$ with respect to fuzzy negations $N$ is the function:
		$$\textsf{GM}_{\Gamma}^{wN}(x_1,\cdots,x_n)=\sum\limits_{i=1}^n f_i(N(x_1),\cdots,N(x_n))\cdot x_i$$
	\end{defi}
	
	It is obvious that the weak dual of a \textsf{GM} functions is a \textsf{GM} function. Futhermore, we can define the weak dual of a \textsf{BGM} function, which is also a \textsf{BGM} function.
	
	\begin{exem}
		The weak dual of \textsf{GM} defined in example 6 with respect to fuzzy negation $N(x)=1-x^\alpha$ is
		$$\textsf{GM}_{\Gamma}^{wN}(x_1,\cdots,x_n)=\left\{\begin{array}{ll}0, & \mbox{ if } x_1,\cdots,x_n=0\\ \frac{n-\sum\limits_{i=1}^n (1-x_i)^\alpha \cdot x_i}{n-\sum\limits_{i=1}^n (1-x_i)^\alpha}, & \mbox{ otherwise}\end{array}\right.$$
	\end{exem}
	
	The construction of weak duals can be generalized, as follow:
	
	\begin{prop}
		Let $\gamma_1,\cdots,\gamma_n:[0,1]\longrightarrow[0,1]$ be functions. If $\textsf{GM}_{\Gamma}$ is a \textsf{GM} (or \textsf{BGM}) function, then
		$$\textsf{GM}_{\Gamma}^{\gamma_1,\cdots,\gamma_n}(x_1,\cdots,x_n)=\sum\limits_{i=1}^n f_i(\gamma_1(x_1),\cdots,\gamma_n(x_n))\cdot x_i$$
		is a \textsf{GM} (or \textsf{BGM}) function.
	\end{prop}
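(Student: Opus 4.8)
The plan is to reduce the statement to the defining conditions of a FWF (resp.\ wFWF) by pushing the coordinatewise maps $\gamma_j$ inside the weight-functions. Write $\mathbf{y}(\mathbf{x}) = (\gamma_1(x_1),\ldots,\gamma_n(x_n))$ and define $g_i:[0,1]^n\to[0,1]$ by $g_i(\mathbf{x}) = f_i(\mathbf{y}(\mathbf{x}))$ for $1\le i\le n$. Each $g_i$ is well defined with values in $[0,1]$ because every $\gamma_j$ sends $[0,1]$ into $[0,1]$, hence $\mathbf{y}(\mathbf{x})\in[0,1]^n$, and each $f_i$ sends $[0,1]^n$ into $[0,1]$. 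Setting $\Gamma'=\{g_i : 1\le i\le n\}$, direct substitution gives $\textsf{GM}_{\Gamma'}(\mathbf{x}) = \sum_{i=1}^n g_i(\mathbf{x})x_i = \sum_{i=1}^n f_i(\gamma_1(x_1),\ldots,\gamma_n(x_n))x_i = \textsf{GM}_{\Gamma}^{\gamma_1,\ldots,\gamma_n}(\mathbf{x})$, so it suffices to prove that $\Gamma'$ inherits the structure of $\Gamma$.

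For the \textsf{GM} case, fix $\mathbf{x}\in[0,1]^n$. Since $\mathbf{y}(\mathbf{x})\in[0,1]^n$ and $\Gamma$ is a FWF, we get $\sum_{i=1}^n g_i(\mathbf{x}) = \sum_{i=1}^n f_i(\mathbf{y}(\mathbf{x})) = 1$; as $\mathbf{x}$ was arbitrary, $\Gamma'$ is a FWF, whence $\textsf{GM}_{\Gamma}^{\gamma_1,\ldots,\gamma_n} = \textsf{GM}_{\Gamma'}$ is a \textsf{GM} function. This case is entirely routine because the normalization $\sum_i f_i=1$ is imposed at every point of $[0,1]^n$ and composition keeps us inside $[0,1]^n$.

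For the \textsf{BGM} case, condition (I) transfers verbatim by the same reasoning: $\sum_{i=1}^n g_i(\mathbf{x}) = \sum_{i=1}^n f_i(\mathbf{y}(\mathbf{x})) \le 1$ for every $\mathbf{x}$, since $\Gamma$ satisfies (I) at the point $\mathbf{y}(\mathbf{x})\in[0,1]^n$. The step that requires care --- and the only real obstacle --- is the boundary normalization (II): one needs $\sum_{i=1}^n g_i(1,\ldots,1) = \sum_{i=1}^n f_i(\gamma_1(1),\ldots,\gamma_n(1)) = 1$, but a wFWF is only guaranteed to sum to $1$ at $(1,\ldots,1)$, and in general $(\gamma_1(1),\ldots,\gamma_n(1))\neq(1,\ldots,1)$. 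I would handle this by recording the natural hypothesis $\gamma_j(1)=1$ for all $j$ (true for the identity and for the weak-dual choices of interest, where the negation evaluations bring the argument back to $(1,\ldots,1)$): then $\sum_i g_i(1,\ldots,1) = \sum_i f_i(1,\ldots,1) = 1$ and $\Gamma'$ is a wFWF, so $\textsf{GM}_{\Gamma}^{\gamma_1,\ldots,\gamma_n} = \textsf{BGM}_{\Gamma'}$ is a \textsf{BGM} function. Absent that hypothesis, one can still observe that $\textsf{GM}_{\Gamma}^{\gamma_1,\ldots,\gamma_n}$ is of the form $\sum_i g_i(\mathbf{x})x_i$ with $\sum_i g_i\le 1$ and maps into $[0,1]$ (so it is ``\textsf{BGM}-like''), but condition (II) may fail; I would flag this explicitly rather than claim it unconditionally.
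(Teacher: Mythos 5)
Your proof is correct and is exactly the argument the paper intends (the paper's own proof consists of the single word ``Straightforward''): set $g_i(\mathbf{x})=f_i(\gamma_1(x_1),\cdots,\gamma_n(x_n))$ and check that $\Gamma'=\{g_1,\cdots,g_n\}$ is again a FWF, which is immediate for the \textsf{GM} case because the normalization $\sum_{i=1}^n f_i=1$ is imposed at every point of $[0,1]^n$ and the $\gamma_j$ keep arguments inside $[0,1]^n$. Your caveat about the \textsf{BGM} case points to a genuine defect in the statement rather than in your argument: condition (II) of a wFWF is only required at $(1,\cdots,1)$, so without an extra hypothesis such as $\gamma_j(1)=1$ for all $j$ the composed family can fail it --- for instance $f_i(x_1,\cdots,x_n)=x_i/n$ with $\gamma_j\equiv 0$ yields the identically zero function, which cannot be a \textsf{BGM} function since any \textsf{BGM} function takes the value $1$ at $(1,\cdots,1)$. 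The paper does not record this hypothesis, so the qualified version you state is the one that is actually true.
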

	
	\begin{proof}
		Straightforward.
	\end{proof}
	
	When $\gamma_1=\cdots=\gamma_n=\gamma$ are automorphism we say that $\textsf{GM}_{\Gamma}^{\gamma_1,\cdots,\gamma_n}$ is a {\bf weak conjugate} of $\textsf{GM}_{\Gamma}$ and we denote by $\textsf{GM}_{\Gamma}^\gamma$.

\begin{prop}
	If $\Gamma = \{f_1,\cdots,f_n\}$ is a wFWF, then $\Gamma^R=\{g_1,\cdots,f_n\}$, whete $g_i=f_{n-i+1}$, also is a wFWF. Besides that, $\textsf{BGM}_\Gamma^R=\textsf{BGM}_{\Gamma^R}$
\end{prop}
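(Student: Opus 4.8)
The plan is to read both assertions as consequences of a single fact: passing from $\Gamma$ to $\Gamma^R$ merely reindexes the family $\{f_1,\dots,f_n\}$ by the involutive permutation $\rho(i)=n-i+1$, so every quantity attached to $\Gamma$ that is a symmetric function of the members of the family --- in particular the pointwise sum $\sum_i f_i$ --- is left unchanged. First I would check that $\Gamma^R$ is a wFWF. Since $g_i=f_{n-i+1}$, for every ${\bf x}\in[0,1]^n$ the change of index $j=n-i+1$ gives
\[
\sum_{i=1}^n g_i({\bf x})=\sum_{i=1}^n f_{n-i+1}({\bf x})=\sum_{j=1}^n f_j({\bf x}).
\]
Condition (I) for $\Gamma$ then forces $\sum_i g_i({\bf x})\le 1$, and evaluating the same identity at ${\bf x}=(1,\dots,1)$ together with condition (II) for $\Gamma$ gives $\sum_i g_i(1,\dots,1)=1$; hence $\Gamma^R$ satisfies (I) and (II).

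Next I would expand the claimed identity directly from the definitions. We have
\[
\textsf{BGM}_{\Gamma^R}({\bf x})=\sum_{i=1}^n g_i({\bf x})\,x_i=\sum_{i=1}^n f_{n-i+1}({\bf x})\,x_i,
\]
and the substitution $j=n-i+1$, which is a bijection of $\{1,\dots,n\}$ onto itself, rewrites this as $\sum_{j=1}^n f_j({\bf x})\,x_{n-j+1}$. This last expression is exactly $\textsf{BGM}_\Gamma^R({\bf x})$, the mixture in which each weight $f_j({\bf x})$ is paired with the reversed coordinate $x_{n-j+1}$, so $\textsf{BGM}_\Gamma^R=\textsf{BGM}_{\Gamma^R}$; in particular this shows that $\textsf{BGM}_\Gamma^R$ is again a genuine \textsf{BGM} operator.

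I do not expect a real obstacle here: the entire argument rests on the fact that $i\mapsto n-i+1$ is a permutation of the index set, so applying it inside a finite sum neither creates nor destroys terms, and that it fixes the vector $(1,\dots,1)$, which is precisely what allows condition (II) to be transported from $\Gamma$ to $\Gamma^R$. The only point worth spelling out carefully is which reversal the superscript $R$ on $\textsf{BGM}_\Gamma$ denotes: with the ``reversed pairing'' reading used above (each weight paired with the coordinate of opposite index) the computation is complete; if instead $R$ were meant to reverse the input vector, one would need to evaluate each $f_i$ at the reflected argument, but the same index substitution would still close the argument.
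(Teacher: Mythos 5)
Your argument is correct and, since the paper's own proof of this proposition is literally the single word ``Straightforward,'' it supplies exactly the details the authors omit. The reindexing identity $\sum_i g_i({\bf x})=\sum_i f_i({\bf x})$ transports conditions (I) and (II) from $\Gamma$ to $\Gamma^R$ (condition (II) only needs this identity at the point $(1,\dots,1)$, which the reversal fixes, as you note), and the substitution $j=n-i+1$ in the defining sum yields $\textsf{BGM}_{\Gamma^R}({\bf x})=\sum_{j} f_j({\bf x})\,x_{n-j+1}$, which is the claimed equality under your ``reversed pairing'' reading of the superscript $R$.

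One substantive caveat: that reading is not merely convenient but forced, and your parenthetical fallback --- that if $R$ instead reversed the input vector ``the same index substitution would still close the argument'' --- is wrong. Writing ${\bf x}^R=(x_n,\dots,x_1)$, the reversed-input reading gives $\textsf{BGM}_\Gamma({\bf x}^R)=\sum_{j} f_{n-j+1}({\bf x}^R)\,x_j=\sum_{j} g_j({\bf x}^R)\,x_j$, whereas $\textsf{BGM}_{\Gamma^R}({\bf x})=\sum_{j} g_j({\bf x})\,x_j$; the weights are evaluated at different arguments, and no index substitution repairs this. A concrete failure: for $n=2$ take $f_1(x,y)=x$ and $f_2(x,y)=1-x$, so that $g_1(x,y)=1-x$ and $g_2(x,y)=x$. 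Then $\textsf{BGM}_\Gamma(0,1)=f_1(0,1)\cdot 0+f_2(0,1)\cdot 1=1$, while $\textsf{BGM}_{\Gamma^R}(1,0)=g_1(1,0)\cdot 1+g_2(1,0)\cdot 0=0$. So you should present the pairing interpretation as the definition of $\textsf{BGM}_\Gamma^R$ rather than as one of two equally viable options; with that fixed, the proof is complete.
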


\begin{proof}
	Sraightforward.
\end{proof}

In Pereira \cite{Pereira1999,Pereira2000,Pereira2003} some criteria were considered to generate monotone \textsf{GM}. However, in this work we will not give a deep exposition of monotonicity, instead we present a brief discussion on a more weakened form, called {\bf weak monotonicity} or {\bf directional monotonicity}.

\subsection{Direcional Monotonicity}

There are many $n$-ary functions that do not satisfy the monotonicity although, they are monotone with respect to certain directions. In this sence, Wilkin and Beliakov (in \cite{Wilkin2014,Wilkin2015}) introduceed the concept of {\bf weakly monotonicity}, that was generalized by Bustince {\it et al.} in \cite{Bustince2015}, which define the notion of {\bf diretional monotonicity}.

\begin{defi}
	Let ${\bf r}=(r_1,\cdots,r_n)$ be a nnt null $n$-dimentional vector. A function $F:[0,1]^n\longrightarrow[0,1]$ is ${\bf r}${\bf-increasing} if for all ${\bf x}=(x_1,\cdots,x_n)$ and $t>0$ such that $(x_1+tr_1,\cdots,x_n+tr_n)\in[0,1]^n$, we have
	$$F(x_1,\cdots,x_n)\le F(x_1+tr_1,\cdots,x_n+tr_n),$$
	that is, $F$ is increasing in the direction of vector $\mathbf{r}$.
\end{defi}

\begin{defi}
	A function $F:[0,1]^n\longrightarrow[0,1]$ is an $n$-ary {\bf pre-aggregation} function (or simply pre-aggregation) if satisfies the boundary condition, $F(0,\cdots,0)=0$ and $F(1,\cdots,1)=1$, and is ${\bf r}$-increasing for some direction ${\bf r}\in[0,1]^n$.
\end{defi}

Lucca {\it et al.} \cite{Luca2016} presented properties, constructions and application for pre-aggregations. They show that the following functions are pre-aggregations:

\begin{exem}\hfill
	\begin{enumerate}
		\item $Mode(x_1,\cdots,x_n)$ is $(1,1)$-increasing;
		\item $F(x,y)=x-(max\{0,x-y\})^2$ is $(0,1)$-increasing;
		\item The weighted Lehmer mean (with convention $0/0=0$) $$L_{\lambda}(x,y)=\frac{\lambda x^2+(1-\lambda)y^2}{\lambda x+(1-\lambda)y}, \mbox{ where } 0<\lambda<1$$
		is $(1-\lambda,\lambda)$-increasing;
		\item $$A(x,y)=\left\{\begin{array}{ll}x(1-x), & \mbox{ if }y\le 3/4\\ 1, & \mbox{ otherwise}\end{array}\right.$$
		is $(0,a)$-increasing for any $a>0$, but for no other direction;
		\item $$B(x,y)= \left\{\begin{array}{ll}y(1-y), & \mbox{ if }x\le 3/4\\ 1, & \mbox{ otherwise}\end{array}\right.$$
		is $(b,0)$-increasing for any $b>0$, but for no other direction.
	\end{enumerate}
\end{exem}

\begin{rmk}
	Any aggregation functions is also a pre-aggregation function.
\end{rmk}

\begin{prop}
	If $\textsf{BGM}_{\Gamma}$ is shift-invariant, is a pre-aggregation $(k,\cdots,k)$-increasing.
\end{prop}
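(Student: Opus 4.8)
The plan is to check the two defining requirements of a pre-aggregation separately: the boundary condition and ${\bf r}$-increasingness for the direction ${\bf r}=(k,\dots,k)$.

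First I would dispatch the boundary condition directly from the definition of $\textsf{BGM}_\Gamma$. We have $\textsf{BGM}_\Gamma(0,\dots,0)=\sum_{i=1}^n f_i(0,\dots,0)\cdot 0 = 0$ regardless of $\Gamma$, and $\textsf{BGM}_\Gamma(1,\dots,1)=\sum_{i=1}^n f_i(1,\dots,1)\cdot 1 = \sum_{i=1}^n f_i(1,\dots,1)=1$, where the last equality is precisely condition (II) in the definition of a wFWF. So nothing is needed here beyond unfolding definitions.

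Next, for the directional part, I would exploit shift-invariance. Fix any $k>0$, take an arbitrary ${\bf x}=(x_1,\dots,x_n)\in[0,1]^n$ and $t>0$ with $(x_1+tk,\dots,x_n+tk)\in[0,1]^n$, and set $r:=tk$. Since both ${\bf x}$ and $(x_1+r,\dots,x_n+r)$ lie in $[0,1]^n$, we get $0<r\le 1$, so $r$ is an admissible shift; moreover $\textsf{BGM}_\Gamma({\bf x})+r\in[0,1]$ since it equals $\textsf{BGM}_\Gamma(x_1+r,\dots,x_n+r)\in[0,1]$. Applying shift-invariance then gives $\textsf{BGM}_\Gamma(x_1+tk,\dots,x_n+tk)=\textsf{BGM}_\Gamma(x_1,\dots,x_n)+tk\ge \textsf{BGM}_\Gamma(x_1,\dots,x_n)$, because $tk>0$. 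Hence $\textsf{BGM}_\Gamma$ is $(k,\dots,k)$-increasing, and together with the boundary condition this shows it is a pre-aggregation.

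There is essentially no hard step here: the statement is a one-line consequence of the definitions, the only minor care being to verify that the shift amount $tk$ stays in the range $[-1,1]$ (and keeps the output in $[0,1]$) for which shift-invariance is asserted, which is immediate from the hypothesis that the shifted point lies in $[0,1]^n$.
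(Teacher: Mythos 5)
Your proof is correct and takes essentially the same route as the paper: the paper's argument is exactly to apply shift-invariance with the shift $tk>0$ and conclude $\textsf{BGM}_{\Gamma}(x_1+tk,\dots,x_n+tk)=\textsf{BGM}_{\Gamma}(x_1,\dots,x_n)+tk\ge \textsf{BGM}_{\Gamma}(x_1,\dots,x_n)$. The only difference is that you also verify the boundary conditions $\textsf{BGM}_{\Gamma}(0,\dots,0)=0$ and $\textsf{BGM}_{\Gamma}(1,\dots,1)=1$ explicitly via condition (II) of the wFWF definition, a step the paper leaves implicit.
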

\begin{proof}
	Note that for all ${\bf x}=(x_1,x_2,\cdots,x_n)\in[0,1]^n$ and any $t>0$ such that $(x_1+tk,x_2+tk,\cdots,x_n+tk)\in[0,1]$ we have 
	$$\textsf{BGM}_{\Gamma}(x_1+tk,\cdots,x_n+tk)=\textsf{BGM}_{\Gamma}(x_1,\cdots,x_n)+tk,$$
	and so
	$$\textsf{BGM}_{\Gamma}(x_1,\cdots,x_n)\le \textsf{BGM}(x_1+tk,\cdots,x_n+tk)$$	
\end{proof}

\begin{cor}
	If $\Gamma$ is a FWF invariant under translations, i.e, $f_i(x_1+\lambda,x_2+\lambda,...,x_n+\lambda)=f_i(x_1,x_2,...,x_n)$,  for $i\in\{1,2,\cdots,n\}$, for any ${\bf x}=(x_1,\cdots,x_n)\in [0,1]^n$ and $\lambda\in[0,1]$ such that $(x_1+\lambda,x_2+\lambda,...,x_n+\lambda)\in[0,1]^n$, then $\textsf{BGM}_{\Gamma}$ is a pre-aggregation $(k,\cdots,k)$-increasing.
\end{cor}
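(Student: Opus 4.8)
The plan is to derive this corollary simply by chaining Proposition 4 and Proposition 8, after a preliminary bookkeeping observation. First I would note that a FWF is in particular a wFWF (condition (III) implies (I) and (II)), and that under (III) one has $\textsf{BGM}_{\Gamma} = \textsf{GM}_{\Gamma}$; consequently every statement proved for $\textsf{GM}$ applies verbatim to the present $\textsf{BGM}_{\Gamma}$. This step is purely formal but it is what licenses mixing the $\textsf{GM}$ and $\textsf{BGM}$ machinery.

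Second, I would invoke Proposition 4: since $\Gamma$ is a FWF invariant under translations, $\textsf{GM}_{\Gamma}$ (hence $\textsf{BGM}_{\Gamma}$) is shift-invariant. One small point of care: the corollary only assumes translation invariance for $\lambda\in[0,1]$, whereas Proposition 4 states it for $\lambda\in[-1,1]$. The computation in the proof of Proposition 4 is, however, ``pointwise'' in $\lambda$ — for a fixed admissible shift it uses only $f_i({\bf y})=f_i({\bf x})$ with ${\bf y}={\bf x}+(\lambda,\ldots,\lambda)$ — so with the weaker hypothesis it still yields shift-invariance for all \emph{non-negative} shifts $r=\lambda\ge 0$. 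That restricted (positive) shift-invariance is precisely what the next step needs. (Alternatively, one could just strengthen the corollary's hypothesis to $\lambda\in[-1,1]$ and cite Propositions 4 and 8 without comment.)

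Third, I would apply the argument of Proposition 8 to $\textsf{BGM}_{\Gamma}$: fix any $k>0$; for every ${\bf x}\in[0,1]^n$ and every $t>0$ with ${\bf x}+t(k,\ldots,k)\in[0,1]^n$, positive shift-invariance gives
$$\textsf{BGM}_{\Gamma}({\bf x}+t(k,\ldots,k)) = \textsf{BGM}_{\Gamma}({\bf x}) + tk \ge \textsf{BGM}_{\Gamma}({\bf x}),$$
so $\textsf{BGM}_{\Gamma}$ is $(k,\ldots,k)$-increasing. Finally the boundary condition is immediate from (III): $\textsf{BGM}_{\Gamma}(0,\ldots,0)=\sum_{i=1}^{n} f_i(0,\ldots,0)\cdot 0 = 0$ and $\textsf{BGM}_{\Gamma}(1,\ldots,1)=\sum_{i=1}^{n} f_i(1,\ldots,1) = 1$. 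Hence $\textsf{BGM}_{\Gamma}$ satisfies the boundary condition and is ${\bf r}$-increasing for ${\bf r}=(k,\ldots,k)$, i.e.\ it is a pre-aggregation, as claimed.

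There is essentially no hard step here; the corollary is a two-line consequence of the two preceding propositions. The only thing that takes a moment's thought is the hypothesis mismatch on the range of $\lambda$, together with observing that the $(k,\ldots,k)$-increasing conclusion is genuinely available only for $k>0$ (for $k<0$ shift-invariance would force strict decrease along that direction), so the direction vector must be chosen with positive entries.
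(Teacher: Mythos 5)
Your proof is correct and follows the paper's own route exactly: Proposition 4 gives shift-invariance of $\textsf{BGM}_{\Gamma}$, and Proposition 9 (which you cite as Proposition 8 --- a harmless numbering slip; the paper itself miscites Proposition 4 as ``Proposition 5'' here) converts shift-invariance into the $(k,\ldots,k)$-increasing pre-aggregation property. Your additional remarks --- that only non-negative shifts $\lambda\in[0,1]$ are needed for the $k>0$ direction, and that the boundary condition follows from $\sum_i f_i(1,\ldots,1)=1$ --- are sound refinements of details the paper glosses over.
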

\begin{proof}
	By Proposition 5 $\textsf{BGM}_{\Gamma}$ is shift-invariant, and so, by Proposition 9, $\textsf{BGM}_{\Gamma}$ is a pre-aggregation function $(k,k,\cdots,k)$-increasing.
\end{proof}

In fact, the conditions required by Corollary 2 are very strong. In the following proposition, we relax these conditions:

\begin{prop}
	If $\Gamma$ is a rFWF with $f_i(x_1,\cdots,x_n)\le f_i(x_1+\lambda,\cdots,x_i+\lambda)$, for $i\in\{1,2,\cdots,n\}$, for any ${\bf x}=(x_1,\cdots,x_n)\in [0,1]^n$ and $\lambda\in[0,1]$ such that $(x_1+\lambda,x_2+\lambda,...,x_n+\lambda)\in[0,1]^n$, then $\textsf{BGM}_{\Gamma}$ is a pre-aggregation function $(k,k,\cdots,k)$-increasing.
\end{prop}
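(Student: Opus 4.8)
The plan is to verify the two clauses in the definition of a pre-aggregation separately: the boundary condition $F(0,\cdots,0)=0$, $F(1,\cdots,1)=1$, and ${\bf r}$-increasingness for the diagonal direction ${\bf r}=(k,\cdots,k)$ with $k>0$. The boundary condition requires nothing beyond condition (II) on $\Gamma$: since every $f_i$ takes values in $[0,1]$ we get $\textsf{BGM}_{\Gamma}(0,\cdots,0)=\sum_{i=1}^n f_i(0,\cdots,0)\cdot 0=0$, and by condition (II) we get $\textsf{BGM}_{\Gamma}(1,\cdots,1)=\sum_{i=1}^n f_i(1,\cdots,1)\cdot 1=1$.

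For the monotonicity clause, fix ${\bf x}=(x_1,\cdots,x_n)\in[0,1]^n$ and $t>0$ with ${\bf y}:=(x_1+tk,\cdots,x_n+tk)\in[0,1]^n$. The first step is to note that $\lambda:=tk$ lies in $[0,1]$: from $x_i\ge 0$ and $x_i+tk\le 1$ we get $0<tk\le 1$, so the standing hypothesis on the $f_i$'s applies with exactly this $\lambda$, giving $f_i({\bf x})\le f_i({\bf y})$ for every $i$. Then I would expand
$$\textsf{BGM}_{\Gamma}({\bf y})=\sum_{i=1}^n f_i({\bf y})\,(x_i+tk)=\sum_{i=1}^n f_i({\bf y})\,x_i+tk\sum_{i=1}^n f_i({\bf y}),$$
and estimate the two pieces: in the first sum $f_i({\bf y})\ge f_i({\bf x})\ge 0$ and $x_i\ge 0$, so it is at least $\sum_{i=1}^n f_i({\bf x})\,x_i=\textsf{BGM}_{\Gamma}({\bf x})$; the second sum is nonnegative since $t>0$, $k>0$ and each $f_i({\bf y})\ge 0$. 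Adding these inequalities yields $\textsf{BGM}_{\Gamma}({\bf x})\le\textsf{BGM}_{\Gamma}({\bf y})$, which is precisely $(k,\cdots,k)$-increasingness, and together with the boundary condition this makes $\textsf{BGM}_{\Gamma}$ a pre-aggregation.

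I do not expect a genuine obstacle; the only delicate points are bookkeeping ones. One must observe that the admissible shift $tk$ is automatically contained in $[0,1]$, so that the diagonal-monotonicity hypothesis is actually applicable, and one must use that the direction is genuinely positive ($k>0$), which is what makes the term $tk\sum_{i=1}^n f_i({\bf y})$ nonnegative; if $k<0$ the argument as written breaks, and one would instead need the $f_i$ to be decreasing along the diagonal together with $\sum_{i=1}^n f_i\le 1$. It is also worth stating explicitly that this genuinely relaxes Corollary 2: there $\Gamma$ must be a FWF invariant under diagonal translations, whereas here we only require each $f_i$ to be monotone along the diagonal, and only for nonnegative shifts.
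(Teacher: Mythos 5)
Your proof is correct and follows essentially the same route as the paper's: expand $\textsf{BGM}_{\Gamma}({\bf y})$ as $\sum_{i=1}^n f_i({\bf y})\,x_i + tk\sum_{i=1}^n f_i({\bf y})$ and bound the first sum below by $\textsf{BGM}_{\Gamma}({\bf x})$ using the hypothesis $f_i({\bf x})\le f_i({\bf y})$. Yours is in fact slightly tighter: the paper's intermediate step amounts to $\sum_{i=1}^n f_i({\bf y})\cdot\lambda\ge\lambda$, which only holds when $\Gamma$ is a FWF, whereas you correctly observe that nonnegativity of that term is all the conclusion needs (and you also verify the boundary condition, which the paper leaves implicit).
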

\begin{proof}
	For any ${\bf x}=(x_1,\cdots,x_n)\in [0,1]^n$ and $\lambda\in[0,1]$ such that ${\bf y}=(x_1+\lambda,x_2+\lambda,...,x_n+\lambda)\in[0,1]^n$ we observe that
	\begin{eqnarray*}
		\textsf{BGM}_{\Gamma}({\bf y}) &=& \sum\limits_{i=1}^n f_i({\bf y})\cdot(x_i+\lambda)\\
		& = & \sum\limits_{i=1}^nf_i({\bf y})\cdot x_i + \sum\limits_{i=1}^nf_i({\bf y})\cdot\lambda\\
		& \geq & \sum\limits_{i=1}^nf_i({\bf x})\cdot x_i+\lambda\\
		& \geq & \textsf{BGM}_{\Gamma}({\bf x})
	\end{eqnarray*}
\end{proof}

\begin{exem}
	Let $\Gamma$be the family of functions: $$f_i(x_1,\cdots,x_n)=\left\{\begin{array}{ll}\frac{1}{n},& \mbox{ if } x_1=\cdots=x_n\\ \frac{x_{(1)}-x_i}{\sum\limits_{j=1}^n(x_{(1)}-x_j)},& \mbox{ otherwise}\end{array} \right..$$
	We can easily prove that all those functions satisfy:
	$$f_i(x_1+\lambda,x_2+\lambda,\cdots,x_n+\lambda)= f_i(x_1,x_2,\cdots,x_n).$$
	More generally, for any $\alpha\geq 1$
	$$f_i(x_1,\cdots,x_n)=\left\{\begin{array}{ll}\frac{1}{n},& \mbox{ if } x_1=\cdots=x_n\\ \frac{x_{(1)}-x_i}{\sum\limits_{j=1}^n(x_{(1)}-x_j)^\alpha},& \mbox{ otherwise}\end{array} \right.$$ is such that
	$$f_i(x_1,x_2,\cdots,x_n)\le f_i(x_1+\lambda,x_2+\lambda,\cdots,x_n+\lambda).$$
	Thus, the corresponding \textsf{BGM} is $(k,\cdots,k)$-increasing. In additon, note that, for $\alpha>1$, $\Gamma=\{f_i\}$ does not satisfies $\sum\limits_{i=1}^n f_i(\bf x)=1$.
\end{exem}

We can also establish a criterion analogous to Proposition 10, by replacing the vector $(k,\cdots, k)$ by direction ${\bf r}$, as follow:

\begin{prop}
	If $\Gamma$ is a FWF such that there is a diretional vector ${\bf r}=(r_1,r_2,\cdots,r_n)\in[0,1]^n$ with $f_i(x_1,\cdots,x_n)\le f_i(x_1+\lambda\cdot r_1,\cdots,x_i+\lambda\cdot r_n)$, for $i\in\{1,2,\cdots,n\}$, for any ${\bf x}=(x_1,\cdots,x_n)\in [0,1]^n$ and $\lambda\in[0,1]$ such that $(x_1+\lambda\cdot r_1,x_2+\lambda\cdot r_2,...,x_n+\lambda\cdot r_n)\in[0,1]^n$, then $\textsf{BGM}_{\Gamma}$ is a pre-aggregation function ${\bf r}$-increasing.
\end{prop}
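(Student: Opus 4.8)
The plan is to follow the proof of Proposition~10 almost verbatim, with one extra observation to accommodate the non-uniform direction ${\bf r}$. Reading the hypothesis with the evident typo corrected, it asserts that each $f_i$ is itself ${\bf r}$-increasing, i.e. $f_i(x_1,\cdots,x_n)\le f_i(x_1+\lambda r_1,\cdots,x_n+\lambda r_n)$ whenever $\lambda\in[0,1]$ and the shifted point lies in $[0,1]^n$. First I would dispatch the boundary condition: since $\Gamma$ is a FWF, $\sum_{i=1}^n f_i({\bf x})=1$ for every ${\bf x}$, so $\textsf{BGM}_{\Gamma}(1,\cdots,1)=\sum_{i=1}^n f_i(1,\cdots,1)=1$, while $\textsf{BGM}_{\Gamma}(0,\cdots,0)=0$ is immediate since each summand carries a zero factor. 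Hence $\textsf{BGM}_{\Gamma}$ meets the boundary requirement of a pre-aggregation function.

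Next I would verify ${\bf r}$-increasingness. Fix ${\bf x}=(x_1,\cdots,x_n)\in[0,1]^n$ and $t>0$ with ${\bf y}=(x_1+tr_1,\cdots,x_n+tr_n)\in[0,1]^n$, and note first that $f_i({\bf x})\le f_i({\bf y})$ for each $i$: this is immediate from the hypothesis when $t\le 1$, and for $t>1$ it follows by iterating the one-step inequality, all intermediate points $x_j+sr_j$ staying in $[0,1]$ for $0\le s\le t$ because each $r_j\ge 0$. Then, expanding and splitting the sum,
\begin{eqnarray*}
\textsf{BGM}_{\Gamma}({\bf y}) &=& \sum_{i=1}^n f_i({\bf y})\cdot(x_i+tr_i)\\
&=& \sum_{i=1}^n f_i({\bf y})\cdot x_i + t\sum_{i=1}^n f_i({\bf y})\cdot r_i.
\end{eqnarray*}
Since $x_i\ge 0$ and $f_i({\bf x})\le f_i({\bf y})$, the first sum is at least $\sum_{i=1}^n f_i({\bf x})\cdot x_i=\textsf{BGM}_{\Gamma}({\bf x})$; since $t>0$, $f_i({\bf y})\ge 0$ and $r_i\ge 0$, the second sum is nonnegative. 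Therefore $\textsf{BGM}_{\Gamma}({\bf y})\ge\textsf{BGM}_{\Gamma}({\bf x})$, i.e. $\textsf{BGM}_{\Gamma}$ is ${\bf r}$-increasing, and together with the boundary condition this makes it a pre-aggregation.

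The one place where care is needed is the cross term $t\sum_{i=1}^n f_i({\bf y})r_i$. For the uniform direction $(k,\cdots,k)$ of Proposition~10 it collapses to $tk\sum_i f_i({\bf y})=tk$ by the FWF property, yielding an exact shift; for a general ${\bf r}$ it does not simplify, and one can only argue that it is $\ge 0$. This is precisely why the hypothesis requires ${\bf r}\in[0,1]^n$ (so that all $r_i\ge 0$) rather than an arbitrary nonzero direction. I would also remark that the FWF hypothesis is used only for the boundary value $\textsf{BGM}_{\Gamma}(1,\cdots,1)=1$; the ${\bf r}$-increasing part of the argument goes through unchanged if $\Gamma$ is merely a wFWF.
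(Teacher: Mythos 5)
Your proof is correct and follows essentially the same route as the paper, whose own proof is just the remark ``Similar to Proposition 10'': expand $\textsf{BGM}_{\Gamma}({\bf y})$ at the shifted point, bound $\sum_i f_i({\bf y})\cdot x_i$ from below by $\sum_i f_i({\bf x})\cdot x_i$ using the hypothesis on the $f_i$ and $x_i\ge 0$, and note that the cross term $t\sum_i f_i({\bf y})\cdot r_i$ is nonnegative. If anything you are more careful than the paper, e.g.\ in extending the hypothesis from $\lambda\in[0,1]$ to arbitrary $t>0$ by iteration, and in observing that for a general direction ${\bf r}$ the cross term no longer collapses to an exact shift as it does in Proposition 10.
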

\begin{proof}
	Similar to Proposition 10.
\end{proof}

\begin{cor}
	If $\Gamma$ is a wFWF such that there is a diretional vector ${\bf r}$ with $\frac{\partial{f_i}}{\partial {\bf r}}({\bf x})\geq0$ for any $f_i\in\Gamma$ and ${\bf x}\in[0,1]^n$, then $\textsf{BGM}_{\Gamma}$ is a pre-aggregation function ${\bf r}$-increasing.
\end{cor}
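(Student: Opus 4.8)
The plan is to reduce the corollary to the immediately preceding proposition (the one requiring that each weight-function be nondecreasing along ${\bf r}$), so that the only genuine work is an elementary calculus fact: having $\frac{\partial f_i}{\partial {\bf r}}\ge 0$ everywhere on the cube forces $f_i$ to be nondecreasing along the direction ${\bf r}$. First I would dispose of the boundary condition of a pre-aggregation. Since the extreme vectors either annihilate or preserve the summands, $\textsf{BGM}_{\Gamma}(0,\dots,0)=\sum_{i=1}^{n}f_i(0,\dots,0)\cdot 0=0$ and $\textsf{BGM}_{\Gamma}(1,\dots,1)=\sum_{i=1}^{n}f_i(1,\dots,1)=1$, the latter by clause (II) of the definition of a wFWF; hence $\textsf{BGM}_{\Gamma}$ already satisfies $F(0,\dots,0)=0$ and $F(1,\dots,1)=1$.

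Second I would prove the monotonicity of each weight-function along ${\bf r}$. Fix $f_i\in\Gamma$, a point ${\bf x}\in[0,1]^n$, a direction ${\bf r}$ (taken, as in the proposition being invoked, with nonnegative entries), and a scalar $t>0$ with ${\bf x}+t{\bf r}\in[0,1]^n$. By convexity of $[0,1]^n$ the whole segment $\{{\bf x}+s{\bf r}\ :\ s\in[0,t]\}$ lies in the cube, so $g(s)=f_i({\bf x}+s{\bf r})$ is defined on $[0,t]$ with $g'(s)=\frac{\partial f_i}{\partial {\bf r}}({\bf x}+s{\bf r})\ge 0$; hence $g$ is nondecreasing and $f_i({\bf x})=g(0)\le g(t)=f_i({\bf x}+t{\bf r})$. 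This is precisely the hypothesis of the preceding proposition, and re-running its proof --- which, as shown next, never actually uses $\sum_{i}f_i=1$ and is therefore valid for a wFWF and not only for a FWF --- gives, with ${\bf y}={\bf x}+t{\bf r}$,
\begin{eqnarray*}
\textsf{BGM}_{\Gamma}({\bf y})
&=&\sum_{i=1}^{n}f_i({\bf y})\,(x_i+tr_i)\\
&=&\sum_{i=1}^{n}f_i({\bf y})\,x_i+t\sum_{i=1}^{n}f_i({\bf y})\,r_i\\
&\ge&\sum_{i=1}^{n}f_i({\bf x})\,x_i\\
&=&\textsf{BGM}_{\Gamma}({\bf x}),
\end{eqnarray*}
where the inequality uses $f_i({\bf y})\ge f_i({\bf x})$, $x_i\ge 0$, $t>0$, $r_i\ge 0$ and $f_i\ge 0$. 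Together with the boundary values this shows that $\textsf{BGM}_{\Gamma}$ is a pre-aggregation which is ${\bf r}$-increasing.

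The one point deserving care is the regularity behind the step ``$g'\ge 0$ implies $g$ nondecreasing''. If the $f_i$ are $C^1$, or merely differentiable along ${\bf r}$, this is the usual mean value inequality applied to $g$. If one only wishes to assume that the one-sided directional derivative $\frac{\partial f_i}{\partial {\bf r}}({\bf x})=\lim_{t\to 0^{+}}\frac{f_i({\bf x}+t{\bf r})-f_i({\bf x})}{t}$ exists and is nonnegative, then $g$ may fail to be classically differentiable and one must instead invoke the mean value inequality for functions with nonnegative right-hand (Dini) derivative to conclude that $g$ is nondecreasing; the remainder of the argument is unaffected. Everything else being routine, this regularity remark is really the only obstacle.
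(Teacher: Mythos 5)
Your proposal is correct and follows the same route the paper intends: the corollary is stated without proof as an immediate consequence of the preceding proposition, via the elementary fact that a nonnegative directional derivative along ${\bf r}$ forces each $f_i$ to be nondecreasing along ${\bf r}$, which is exactly your reduction. Your filled-in details are sound, and your version of the inequality chain (bounding $t\sum_i f_i({\bf y})r_i$ below by $0$ rather than by $\lambda$) is actually the right way to make the argument go through under the weaker wFWF hypothesis $\sum_i f_i({\bf x})\le 1$.
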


\begin{exem}
	If $f_i=w_i$ is constant, then $\textsf{BGM}_{\Gamma}$ is ${\bf r}$-increaing for any direction ${\bf r}$. Now, given a direction ${\bf r}=(r_1,\cdots,r_n)\in[0,1]^n$ we can build a ${\bf r}$-increasing $\textsf{BGM}$ function defining:
	$$f_i(x_1,\cdots,x_n)=\left\{\begin{array}{ll} 0,& \mbox{ if } x_{(n)}=0\\ \frac{min\left\{\frac{x_i}{r_i},1\right\}}{n}, & \mbox{ otherwise}\end{array}\right.$$
\end{exem}

\newpage
In what follows we propose a new GM function which will be investigated and applied in this paper.

\begin{exem}
	Definition: Let $\Gamma$ be the following family of functions:
	 $$f_i({\bf x})=\left\{\begin{array}{ll}
	\frac{1}{n}, & \mbox{if } {\bf x}=(x,...,x)\\ \frac{1}{n-1}\left(1-\frac{|x_i-Med({\bf x})|}{\sum\limits_{j=1}^n|x_j-Med({\bf x})|}\right), & \mbox{otherwise}\end{array}\right.$$
	$\Gamma$ is a FWF, whose \textsf{GM} function, denoted by $\mathbf{H}$. The computation of $\mathbf{H}$ can be performed using the following expressions:
	{\small \begin{eqnarray*}
			\mathbf{H} ({\bf x}) & = & \left\{ \begin{array}{ll}
				x, & \mbox{ if } {\bf x}=(x,...,x)\\
				\frac{1}{n-1}\sum\limits_{i=1}^n\left(x_i-\frac{x_i|x_i-Med({\bf x})|}{\sum\limits_{j=1}^n|x_j-Med({\bf x})|} \right), &  \mbox{otherwise}\end{array}\right.
	\end{eqnarray*}}
\end{exem}

An interesting property that function is that:
$$\mathbf{H}({\bf x})=\textsf{OWA}_{(f_1({\bf x}),\cdots, f_n({\bf x}))}({\bf x})$$

In the next subsection we discuss others properties of the function $\mathbf{H}$.

\subsection{Properties of $\mathbf{H}$}\hfill

In this part of paper we will discuss about the properties of operator \textbf{H}. It is easy to check that $\sum\limits_{i=1}^n f_i({\bf x})=1$ for any ${\bf x}\in[0,1]^n$ and therefore, $\mathbf{H}$ is an idempotent averaging function. Furthermore, its weight-functions are invariant under translations and is also homogeneous of order $0$, because: 

\begin{enumerate}
	\item For ${\bf y}=(x_1+\lambda,...,x_n+\lambda)$ we have $Med({\bf x'})=Med({\bf x})+\lambda$ and for ${\bf x}\ne(x,...,x)$. So,
	\begin{eqnarray*}
		f_i({\bf y}) & = & \textstyle\frac{1}{n-1}\left(1-\frac{|x_i+\lambda-Med({\bf y})|}{\sum\limits_{j=1}^n|x_j+\lambda-Med({\bf x'})|}\right)\\
		& = & \textstyle\frac{1}{n-1}\left(1-\frac{|x_i+\lambda-(Med({\bf x})+\lambda)|}{\sum\limits_{j=1}^n|x_j+\lambda-(Med({\bf x})+\lambda)|}\right)\\
		& = & \textstyle\frac{1}{n-1}\left(1-\frac{|x_i-Med({\bf x})|}{\sum\limits_{j=1}^n|x_j-Med({\bf x})|}\right)\\
		& = & f_i({\bf x}).
	\end{eqnarray*}
	Therefore, $(f_1({\bf y}),...,f_n({\bf y}))=(f_1({\bf x}),...,f_n({\bf x}))$. 
	
	\item The case in which ${\bf x} =(x,...,x)$ is immediate.
	
	\item To check the second property, make ${\bf y}=(\lambda x_1,...,\lambda x_n)$, note that $Med({\bf y})=\lambda Med ({\bf x})$ and for ${\bf x} \ne (x,...,x)$	
	\begin{eqnarray*}
		f_i({\bf y}) & = & \textstyle\frac{1}{n-1}\left(1-\frac{|\lambda x_i-Med({\bf \lambda x})|}{\sum\limits_{j=1}^n|\lambda x_j-Med({\bf \lambda x})|}\right)\\
		& = & \textstyle\frac{1}{n-1}\left(1-\frac{|\lambda x_i-\lambda Med({\bf x})|}{\sum\limits_{j=1}^n|\lambda x_j-\lambda Med({\bf x})|}\right)\\
		& = & \textstyle\frac{1}{n-1}\left(1-\frac{|\lambda|\cdot|x_i-Med({\bf x})|}{|\lambda|\cdot\sum\limits_{j=1}^n|x_j-Med({\bf x})|}\right)\\
		& = & f_i({\bf x})
	\end{eqnarray*}
	Therefore,$(f_1({\bf x''}),...,f_n({\bf x''}))=(f_1({\bf x}),..., f_n({\bf x}))=f({\bf x})$.
	\item The case in which ${\bf x} =(x,...,x)$ is also immediately. Note that the case in which $\lambda=0$ is obvious.
\end{enumerate}

\begin{cor}
	$\mathbf{H}$ is shift-invariant and homogeneous.  
\end{cor}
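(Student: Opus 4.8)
The plan is to treat this corollary as a bookkeeping step: $\mathbf{H}$ is nothing but the operator $\textsf{GM}_{\Gamma}$ attached to the FWF $\Gamma$ introduced in Example 11, and the discussion immediately preceding the statement has already checked all the hypotheses we need of $\Gamma$. Explicitly, that discussion establishes: (i) $\sum_{i=1}^{n} f_i({\bf x}) = 1$ for every ${\bf x}\in[0,1]^n$, so $\Gamma$ is a FWF; (ii) each weight-function $f_i$ is invariant under translations; and (iii) each $f_i$ is homogeneous of order $0$. The only sub-cases not covered by the generic computation, namely the diagonal ${\bf x}=(x,\dots,x)$ and the degenerate scaling factor $\lambda=0$, were also disposed of there, so I may use (i)--(iii) freely.

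For shift-invariance I would invoke the proposition asserting that a FWF whose weight-functions are invariant under translations yields a shift-invariant $\textsf{GM}_{\Gamma}$ (Proposition 4): by (i) and (ii) its hypotheses hold for $\Gamma$, hence $\mathbf{H}=\textsf{GM}_{\Gamma}$ is shift-invariant.

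For homogeneity I would invoke the proposition asserting that if a wFWF has weight-functions homogeneous of order $k$ then $\textsf{BGM}_{\Gamma}$ is homogeneous of order $k+1$ (Proposition 5), applied with $k=0$. A FWF is a wFWF, and on a FWF the operators $\textsf{GM}_{\Gamma}$ and $\textsf{BGM}_{\Gamma}$ coincide; so by (iii) that proposition gives that $\textsf{BGM}_{\Gamma}=\mathbf{H}$ is homogeneous of order $0+1=1$, which in the terminology of the paper means $\mathbf{H}$ is homogeneous.

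There is no genuine obstacle here; the proof is two short invocations. The single point worth stating is that (ii) and (iii) are properties of the weight-functions $f_i$, not a priori of $\mathbf{H}$ itself, and it is precisely these two propositions that transport them to the aggregated operator — preserving shift-invariance in the first case, and raising the order of homogeneity by one in the second — after which nothing remains to be computed.
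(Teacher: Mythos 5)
Your proposal is correct and matches the paper's intent exactly: the paper gives no separate proof for this corollary, deriving it from the immediately preceding verification that the weight-functions of $\mathbf{H}$ are translation-invariant and homogeneous of order $0$, combined with Proposition 4 (shift-invariance) and Proposition 5 with $k=0$ (homogeneity of order $1$). Nothing in your argument deviates from that route, including the observation that the diagonal and $\lambda=0$ cases are already handled.
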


In addition to idempotency, homogeneity and shift-invariance $\mathbf{H}$ has the following proprerties.

\begin{prop}
	$\mathbf{H}$ has no neutral element.
\end{prop}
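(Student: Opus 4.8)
The plan is a proof by contradiction. Suppose $e\in[0,1]$ were a neutral element of $\mathbf{H}$. Since $[0,1]$ contains more than one point, fix some $t\in[0,1]$ with $t\ne e$ and consider the vector ${\bf x}=(t,e,\ldots,e)$, i.e.\ $t$ in the first coordinate and $e$ in the remaining $n-1$ coordinates. By the definition of neutral element we would need $\mathbf{H}({\bf x})=t$; I will evaluate $\mathbf{H}({\bf x})$ directly from the family $\Gamma$ defining $\mathbf{H}$ and show that this equality forces $e=t$, a contradiction.

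To evaluate $\mathbf{H}({\bf x})$, first note that ${\bf x}\ne(x,\ldots,x)$ because $t\ne e$, so the ``otherwise'' branch of each $f_i$ applies; moreover the denominator $\sum_{j=1}^n|x_j-Med({\bf x})|$ is strictly positive, as it vanishes only when all entries coincide. The one thing that needs care is the value of $Med({\bf x})$. If $n\ge 3$, then at least $n-1\ge 2$ of the entries of ${\bf x}$ equal $e$, and since $t$ (being $\ne e$) occupies an extremal position in $sort({\bf x})$, the middle entry of $sort({\bf x})$ (or both middle entries, when $n$ is even) equals $e$; hence $Med({\bf x})=e$. Then $\sum_{j=1}^n|x_j-Med({\bf x})|=|t-e|$, so $f_1({\bf x})=\frac{1}{n-1}(1-1)=0$ and $f_i({\bf x})=\frac{1}{n-1}(1-0)=\frac{1}{n-1}$ for $i\ge 2$, giving $\mathbf{H}({\bf x})=0\cdot t+\sum_{i=2}^n\frac{1}{n-1}\,e=e$. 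If $n=2$, then $Med(t,e)=\frac{t+e}{2}$, again $\sum_{j}|x_j-Med({\bf x})|=|t-e|$, and a symmetric computation gives $f_1({\bf x})=f_2({\bf x})=\frac12$, so $\mathbf{H}({\bf x})=\frac{t+e}{2}$.

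In either case the requirement $\mathbf{H}({\bf x})=t$ becomes $e=t$, contradicting $t\ne e$. Hence no $e\in[0,1]$ is a neutral element of $\mathbf{H}$. The only delicate point is the median computation, i.e.\ verifying that $n-1$ copies of $e$ pin down $Med({\bf x})$; this is a routine inspection of $sort({\bf x})$, but the parity of $n$ and the small case $n=2$ should be treated separately.
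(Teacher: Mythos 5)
Your proof is correct and follows essentially the same route as the paper: place the distinguished value in one coordinate and $e$ in the rest, compute $Med({\bf x})=e$ for $n\ge 3$ (giving $\mathbf{H}({\bf x})=e$) and $Med({\bf x})=\frac{t+e}{2}$ for $n=2$ (giving $\mathbf{H}({\bf x})=\frac{t+e}{2}$), and derive the contradiction $e=t$. No issues.
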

\begin{proof}
	Suppose $\mathbf{H}$ has a neutral element $e$, find the vector of weight for ${\bf x}=(e,...,e,x,e,...,e)$. Note that if $n\geq 3$, then $Med({\bf x})=e$ and therefore,
	\begin{eqnarray*}
		f_i({\bf x}) & = & \textstyle\frac{1}{n-1}\left( 1-\frac{|x_i-Med({\bf x})|}{\sum\limits_{j=1}^n|x_j-Med({\bf x})|}\right)\\
		& = & \textstyle\frac{1}{n-1}\left(1-\frac{|x_i-e|}{\sum\limits_{j=1}^n|x_j-e|}\right)\\
		& = & \textstyle\frac{1}{n-1}\left(1-\frac{|x_i-e|}{|x-e|}\right)
	\end{eqnarray*}
	therefore,
	$$f_i({\bf x})=	\displaystyle\left\{\begin{array}{ll}
	\frac{1}{n-1}, \mbox{ if } x_i=e\\
	0, \mbox{ if } x_i=x
	\end{array}\right., \mbox{ to } n\geq 3$$
	i.e.,
	$$f({\bf x})=\textstyle\left(\frac{1}{n-1},...,\frac{1}{n-1},0,\frac{1}{n-1},...,\frac{1}{n-1} \right)$$
	and
	$$\mathbf{H}({\bf x})=(n-1)\cdot\frac{e}{n-1}=e$$
	But since $e$ is a neutral element of $\mathbf{H}$, $\mathbf{H}({\bf x})=x$. Absurd, since we can always take  $x\ne e$.
	
	For $n=2$, we have $Med({\bf x})=\frac{x+e}{2}$, where ${\bf x}=(x,e)$ or ${\bf x}=(e,x)$. In both cases it is not difficult to show that $f({\bf x})=(0.5,0.5)$ and $\mathbf{H}({\bf x})=\frac{x+e}{2}$. Thus, taking $x\ne e$, again we have $\mathbf{H}(x,e)\ne x$.
\end{proof}

\begin{prop}
	$\mathbf{H}$ has no absorbing elements.
\end{prop}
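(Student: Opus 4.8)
The plan is to argue by contradiction in the same spirit as the previous proposition: suppose there is an $a\in[0,1]$ such that $\mathbf{H}(x_1,\dots,x_{i-1},a,x_{i+1},\dots,x_n)=a$ for every choice of the remaining coordinates and every position $i$. First I would fix a convenient test vector, namely one with $a$ in a single coordinate and a constant value $x\ne a$ in all the others, say ${\bf x}=(a,x,\dots,x)$ with $n\ge 3$. For this vector the median is $Med({\bf x})=x$ (since at least two of the $n$ coordinates equal $x$), so the denominator $\sum_{j=1}^n|x_j-Med({\bf x})|$ reduces to $|a-x|$, which is nonzero. Plugging into the definition of the weight-functions gives $f_1({\bf x})=\frac{1}{n-1}\bigl(1-\frac{|a-x|}{|a-x|}\bigr)=0$ and $f_j({\bf x})=\frac{1}{n-1}\bigl(1-0\bigr)=\frac{1}{n-1}$ for $j\ge 2$. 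Hence
$$\mathbf{H}({\bf x})=0\cdot a+\sum_{j=2}^n\frac{1}{n-1}\,x=(n-1)\cdot\frac{x}{n-1}=x.$$
Since $x\ne a$, this contradicts $\mathbf{H}({\bf x})=a$, so no absorbing element can exist for $n\ge 3$.

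For $n=2$ the argument is even shorter: with ${\bf x}=(a,x)$ we get $Med({\bf x})=\frac{a+x}{2}$, so $|a-Med({\bf x})|=|x-Med({\bf x})|=\frac{|a-x|}{2}$ and therefore $f_1({\bf x})=f_2({\bf x})=\frac12\bigl(1-\frac12\bigr)\cdot 2$; more directly, in the two-element case both weights equal $\frac12$ (as already computed in the proof of Proposition 13), so $\mathbf{H}(a,x)=\frac{a+x}{2}$, which differs from $a$ whenever $x\ne a$. In both cases the absorbing-element equation fails, completing the proof.

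I do not expect a genuine obstacle here; the computation is entirely parallel to the neutral-element argument of Proposition 13, and the only point requiring a word of care is ensuring the denominator in the definition of $f_i$ does not vanish — which is exactly why I pick the remaining coordinates all equal to a common $x\ne a$, forcing $Med({\bf x})=x$ and $\sum_j|x_j-Med({\bf x})|=|a-x|>0$, so that the "otherwise" branch of the definition of $\Gamma$ genuinely applies. One should also note that an absorbing element must work in every coordinate position and for every completion of the vector, so exhibiting a single failing instance suffices.
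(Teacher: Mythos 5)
Your proof is correct and follows essentially the same route as the paper: both evaluate $\mathbf{H}$ on a vector with $a$ in one coordinate and a common value in the rest, find $f_1({\bf x})=0$ and $f_j({\bf x})=\frac{1}{n-1}$ for $j\ge 2$, and conclude that $\mathbf{H}$ outputs the common value rather than $a$ (with the separate $n=2$ case handled via $\mathbf{H}(a,x)=\frac{a+x}{2}$). The only, harmless, difference is that you use the single parametrized test vector $(a,x,\dots,x)$ with $x\ne a$, while the paper uses the two specific instances $(a,0,\dots,0)$ and $(a,1,\dots,1)$ to force the contradictory conclusions $a=0$ and $a=1$.
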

\begin{proof}
	To $n=2$, we have $\mathbf{H}({\bf x})=\frac{x_1+x_2}{2}$, which has no absorbing elements. Now for $n\geq 3$ we have to ${\bf x}=(a,0,...,0)$ with $Med({\bf x})=0$ therefore,
	$$f_1({\bf x})=\frac{1}{n-1}\left(1-\frac{a}{a} \right)=0 \mbox{ and } f_i=\frac{1}{n-1}, \forall i=2,...,n.$$
	therefore,
	$$\mathbf{H}(a,0,...,0)=0\cdot a+\frac{1}{n-1}\cdot 0+...+\frac{1}{n-1}\cdot 0=a\Rightarrow a=0,$$
	but to ${\bf x}=(a,1,...,1)$ we have to $Med({\bf x})=1$. Furthermore,
	$$f_1({\bf x})=\frac{1}{n-1}\left(1-\frac{1-a}1-{a} \right)=0 $$
	and
	$$f_i=\frac{1}{n-1} \mbox{ para } i=2,3,...,n.$$
	therefore,
	$$\mathbf{H}(a,1,...,1)=0\cdot a+\frac{1}{n-1}\cdot 1+...+\frac{1}{n-1}\cdot 1=a\Rightarrow a=1.$$
	With this we prove that $\mathbf{H}$ does not have absorbing elements.
\end{proof}

\begin{prop}
	$\mathbf{H}$ has no zero divisors.
\end{prop}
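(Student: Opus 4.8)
The plan is to derive a contradiction from the averaging property of $\mathbf{H}$, which has already been established: it was shown that $\sum_{i=1}^n f_i({\bf x})=1$ for all ${\bf x}\in[0,1]^n$, so $\Gamma$ is a FWF and $\mathbf{H}$ is an idempotent averaging function, i.e. $Min({\bf x})\le \mathbf{H}({\bf x})\le Max({\bf x})$ for every ${\bf x}\in[0,1]^n$. Suppose, for contradiction, that $a\in\ ]0,1[$ is a zero divisor of $\mathbf{H}$. By definition this means $\mathbf{H}({\bf x})=0$ for every ${\bf x}\in\ ]0,1]^n$ having $a$ in some coordinate. First I would exhibit the most convenient such witness, the constant vector ${\bf x}=(a,\ldots,a)$, which lies in $]0,1]^n$ precisely because $a>0$; the zero-divisor hypothesis then forces $\mathbf{H}(a,\ldots,a)=0$, whereas idempotency gives $\mathbf{H}(a,\ldots,a)=a>0$, a contradiction.

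Actually, the cleaner version I would write records that $\mathbf{H}$ cannot attain the value $0$ anywhere on $]0,1]^n$: for any ${\bf x}\in\ ]0,1]^n$ all coordinates are strictly positive, hence $Min({\bf x})>0$, and the averaging inequality yields $\mathbf{H}({\bf x})\ge Min({\bf x})>0$; so no $a$ can satisfy the defining condition of a zero divisor. I do not anticipate any genuine obstacle here — the proof is essentially a one-line consequence of results already in hand. The only points requiring care are to \emph{cite} the fact that $\mathbf{H}$ is averaging (equivalently, to note $Min({\bf x})\le\mathbf{H}({\bf x})$ via Proposition~2, since $\Gamma$ is a FWF) rather than re-proving it, and to keep in mind that the domain in the definition of a zero divisor is $]0,1]^n$, on which $Min$ is strictly positive, so that the inequality $\mathbf{H}({\bf x})\ge Min({\bf x})$ is strong enough to conclude.
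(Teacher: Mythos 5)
Your proof is correct and follows essentially the same route as the paper's: both arguments come down to the fact that $\mathbf{H}>0$ on $]0,1]^n$ because the weights $f_i$ are nonnegative and sum to $1$; you simply cite the already-established averaging property (Proposition 2) where the paper redoes the term-by-term computation ($f_i({\bf x})\cdot x_i=0$ for all $i$ forces $\sum f_i({\bf x})=0\ne 1$). Your first variant --- the single witness $(a,\dots,a)$ killed by idempotency --- is also valid and even shorter, since the definition of zero divisor only requires exhibiting one vector in $]0,1]^n$ containing $a$ on which $\mathbf{H}$ is nonzero.
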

\begin{proof}
	Let $a\in\ ]0,1[$ and consider ${\bf x}=(a,x_2,...,x_n)\in\ ]0,1]^n$. In order to have $\mathbf{H}({\bf x})=\sum\limits_{i=1}^n f_i({\bf x})\cdot x_i=0$ we have $f_i({\bf x})\cdot x_i=0$ for all $i=1,2,...,n$. But as $a\ne 0$ and we can always take $x_2,x_3,...,x_n$ also different from zero, then for each $i=1,2,...,n$ there remains only the possibility of terms:
	$$f_i{(\bf x)}=0 \mbox{ para } i=1,2,...,n.$$
	This is an absurd, for $f_i({\bf x})\in [0,1]$ e $\sum\limits_{i=1}^nf_i({\bf x})=1$. like this, $\mathbf{H}$ has no zero divisors.
\end{proof}

\begin{prop}
	$\mathbf{H}$ does not have one divisors
\end{prop}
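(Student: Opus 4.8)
The plan is a proof by contradiction mirroring the zero-divisor argument. Suppose some $a\in[0,1[$ is a one divisor of $\mathbf{H}$, so that $\mathbf{H}({\bf x})=1$ for every admissible ${\bf x}\in\,]0,1]^n$ having $a$ among its coordinates; I would then exhibit one such vector on which $\mathbf{H}$ is strictly below $1$. The quickest route uses idempotency (established right before this statement, since $\mathbf{H}$ is a \textsf{GM} function and every \textsf{GM} function is idempotent): the constant vector $(a,\dots,a)$ contains $a$ in every coordinate, yet $\mathbf{H}(a,\dots,a)=a<1$, which already contradicts the one-divisor property whenever $(a,\dots,a)$ is admissible.

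To avoid relying on the constant vector (for instance for the borderline value $a=0$), I would instead take ${\bf x}=(a,x_2,\dots,x_n)$ and use the fact that, because $\Gamma$ is a FWF, $\sum_{i=1}^n f_i({\bf x})=1$, so that
$$1-\mathbf{H}({\bf x})=\sum_{i=1}^n f_i({\bf x})\,(1-x_i)\ \ge\ f_1({\bf x})\,(1-a)\ \ge\ 0 .$$
It therefore suffices to choose $x_2,\dots,x_n\in\,]0,1]$ making $f_1({\bf x})>0$. From the definition of $\Gamma$ one checks that $f_1({\bf x})=0$ exactly when $x_2=\cdots=x_n$ (all equal to $Med({\bf x})$); hence any choice in which $x_2,\dots,x_n$ are not all equal gives $f_1({\bf x})>0$, and then $1-\mathbf{H}({\bf x})\ge f_1({\bf x})(1-a)>0$, i.e.\ $\mathbf{H}({\bf x})\ne 1$, the required contradiction. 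The case $n=2$ is even more direct: there $\mathbf{H}(x_1,x_2)=\frac{1}{2}(x_1+x_2)$, so any vector containing $a<1$ has value at most $\frac{1}{2}(a+1)<1$.

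I expect no real obstacle here. The only care needed is the bookkeeping about which vectors are admissible (the value $a=0$ and the degenerate size $n=2$) and the observation that the weight $f_1$ can be kept strictly positive. The substantive ingredient is just the elementary identity $1-\mathbf{H}({\bf x})=\sum_{i=1}^n f_i({\bf x})(1-x_i)$, which holds because $\sum_{i=1}^n f_i({\bf x})=1$ and every $f_i\ge 0$. Hence $\mathbf{H}$ has no one divisor.
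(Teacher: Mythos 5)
Your proof is correct and proceeds in the same spirit as the paper's: both refute the one-divisor property by exhibiting a single vector containing $a$ on which $\mathbf{H}$ falls strictly below $1$. The witnesses differ, though. The paper takes ${\bf x}=(a,0,\dots,0)$ and notes $\mathbf{H}(a,0,\dots,0)=f_1({\bf x})\cdot a\le a<1$; strictly speaking that vector lies outside the domain $]0,1]^n$ stipulated in the paper's definition of a one divisor, so your insistence on admissible witnesses --- the constant vector via idempotency, or a positive vector with a non-constant tail via the identity $1-\mathbf{H}({\bf x})=\sum_{i=1}^n f_i({\bf x})(1-x_i)$ --- is the more careful reading and buys a proof that matches the definition as written. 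Your claim that $f_1({\bf x})=0$ exactly when $x_2=\cdots=x_n=Med({\bf x})$ is right for $n\ge 3$, and the $n=2$ case does reduce to the arithmetic mean as you say. The only cosmetic remark is that the idempotency route alone already settles every $a\in\,]0,1[$, which is the only range the paper's own proof addresses; your second argument is belt-and-braces rather than necessary.
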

\begin{proof}
	Just to see that $a\in\ ]0,1[$, we have to $\mathbf{H}(a,0,...,0)=f_1({\bf x}).a\le a<1$.
\end{proof}

\begin{prop}
	$\mathbf{H}$ is symmetric.
\end{prop}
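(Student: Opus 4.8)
The plan is to verify directly that $\mathbf{H}(x_1,\ldots,x_n)=\mathbf{H}(x_{\sigma(1)},\ldots,x_{\sigma(n)})$ for every permutation $\sigma$ of $\{1,\ldots,n\}$, by tracking how the weight-functions $f_i$ behave under a permutation of the input. First I would dispose of the trivial case: if ${\bf x}=(x,\ldots,x)$ is constant, then any permutation of ${\bf x}$ is again the constant vector $(x,\ldots,x)$, and $\mathbf{H}$ returns $x$ in both cases, so the equality holds on the first branch of the definition. Hence from now on one may assume ${\bf x}$ is not constant, so every relevant evaluation uses the second branch.

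The key auxiliary fact is that both $Med$ and the quantity $S({\bf x}):=\sum_{j=1}^n|x_j-Med({\bf x})|$ are symmetric. For $Med$ this is immediate, since it is an \textsf{OWA} function and therefore depends only on $sort({\bf x})$; and permuting ${\bf x}$ merely reindexes the sum $\sum_j|x_j-Med({\bf x})|$ while leaving $Med({\bf x})$ unchanged, so $S$ is symmetric as well. Writing ${\bf y}=(x_{\sigma(1)},\ldots,x_{\sigma(n)})$, $m=Med({\bf x})=Med({\bf y})$ and $S=S({\bf x})=S({\bf y})$, one then reads off from the definition of the family $\Gamma$ that
$$f_i({\bf y})=\frac{1}{n-1}\left(1-\frac{|x_{\sigma(i)}-m|}{S}\right)=f_{\sigma(i)}({\bf x}).$$

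Finally I would substitute this identity into the defining sum and reindex by $k=\sigma(i)$:
$$\mathbf{H}({\bf y})=\sum_{i=1}^n f_i({\bf y})\cdot y_i=\sum_{i=1}^n f_{\sigma(i)}({\bf x})\cdot x_{\sigma(i)}=\sum_{k=1}^n f_k({\bf x})\cdot x_k=\mathbf{H}({\bf x}),$$
which is the claim. There is no genuine obstacle here; the only points that require care are justifying the symmetry of $Med$ — and hence of $S$ — \emph{before} manipulating the weights, and keeping the constant case separate so that each input is evaluated on the matching branch of the piecewise definition of $f_i$.
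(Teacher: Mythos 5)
Your proof is correct and follows essentially the same route as the paper's: both rest on the symmetry of $Med$ and of $\sum_j|x_j-Med({\bf x})|$, treat the constant case separately, and conclude by reindexing the defining sum (you phrase it via the identity $f_i({\bf y})=f_{\sigma(i)}({\bf x})$, while the paper manipulates the expanded formula for $\mathbf{H}$ directly, but the content is the same).
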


\begin{proof}
	Let $P:\{1,2,...,n\}\rightarrow\{1,2,...,n\}$ be a permutation. So we can easily see that $$Med(x_{P(1)},x_{P(2)},...,x_{P(n)})=Med(x_1,x_2,...,x_n)$$ for all ${\bf x}=(x_1,x_2,...,x_n)\in[0,1]^n$. We also have to $\sum\limits_{i=1}^n |x_{P(i)}-Med(x_{P(1)},x_{P(2)},...,x_{P(n)})|=\sum\limits_{i=1}^n|x_i-Med({\bf x})|$. Thus, it suffices to consider the case where ${\bf y}=(x_{P(1)},x_{P(2)},...,x_{P(n)})\ne (x,x,...,x)$. But for ${\bf y}\ne (x,x,...,x)$ we have to:
	{\small \begin{eqnarray*}
			\mathbf{H} ({\bf y}) & = &	\textstyle\frac{1}{n-1}\sum\limits_{i=1}^n\left(x_{P(i)}-\frac{x_{P(i)}|x_{P(i)}-Med({\bf y})|}{\sum\limits_{j=1}^n|x_{P(i)}-Med({\bf y})|} \right)\\
			& = &\textstyle\frac{\sum\limits_{i=1}^n x_{P(i)}}{n-1}-\frac{1}{n-1}\cdot\sum\limits_{i=1}^n \frac{x_{P(i)}|x_{P(i)}-Med({\bf x})|}{\sum\limits_{j=1}^n|x_{P(i)}-Med({\bf x})|}\\
			& = & \textstyle\frac{\sum\limits_{i=1}^n x_i}{n-1}-\frac{1}{n-1}\cdot\sum\limits_{i=1}^n \frac{x_{P(i)}|x_{P(i)}-Med({\bf x})|}{\sum\limits_{j=1}^n|x_i-Med({\bf x})|}\\
			& = &\textstyle\frac{\sum\limits_{i=1}^n x_i}{n-1}-\frac{1}{n-1}\cdot\sum\limits_{i=1}^n \frac{x_i|x_i-Med({\bf x})|}{\sum\limits_{j=1}^n|x_i-Med({\bf x})|}\\
			& = & \mathbf{H}({\bf x}).
	\end{eqnarray*}}
\end{proof}

\begin{prop}
	If $N:[0,1]\longrightarrow[0,1]$ is the standard fuzzy negation, then $\mathbf{H}^N=\mathbf{H}$.
\end{prop}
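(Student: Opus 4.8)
The plan is to unfold the definition of the $N$-dual for the standard fuzzy negation $N(x)=1-x$, namely $\mathbf{H}^N({\bf x})=1-\mathbf{H}(1-x_1,\dots,1-x_n)$, and to show it coincides with $\mathbf{H}({\bf x})$. First I would dispose of the constant case: if ${\bf x}=(x,\dots,x)$, then since $\mathbf{H}$ is idempotent (established above, and also by Corollary 1), we get $\mathbf{H}^N({\bf x})=1-\mathbf{H}(1-x,\dots,1-x)=1-(1-x)=x=\mathbf{H}({\bf x})$.

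For the non-constant case the key lemma is $Med(1-x_1,\dots,1-x_n)=1-Med(x_1,\dots,x_n)$. This follows because applying $t\mapsto 1-t$ reverses the descending order used by $sort$, so the $k$-th largest coordinate of $(1-x_1,\dots,1-x_n)$ equals $1-x_{(n-k+1)}$; substituting this into the two branches ($n$ odd, $n$ even) of the definition of $Med$ yields the identity. With the lemma in hand, for every $i$ we obtain $|(1-x_i)-Med(1-x_1,\dots,1-x_n)|=|(1-x_i)-(1-Med({\bf x}))|=|x_i-Med({\bf x})|$, and summing over $j$ the two denominators agree as well; hence each weight-function is $N$-invariant, $f_i(1-x_1,\dots,1-x_n)=f_i(x_1,\dots,x_n)$ (the constant branch of $f_i$ being trivially invariant).

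Finally I would combine these facts. Using the $N$-invariance of the $f_i$ and that $\Gamma$ is a FWF, i.e. $\sum_{i=1}^n f_i({\bf x})=1$,
\begin{eqnarray*}
\mathbf{H}^N({\bf x}) & = & 1-\sum_{i=1}^n f_i(1-x_1,\dots,1-x_n)\cdot(1-x_i)\\
& = & 1-\sum_{i=1}^n f_i({\bf x}) + \sum_{i=1}^n f_i({\bf x})\cdot x_i\\
& = & \sum_{i=1}^n f_i({\bf x})\cdot x_i \;=\; \mathbf{H}({\bf x}).
\end{eqnarray*}
Alternatively one may shortcut this last step by invoking Proposition 6: $\mathbf{H}^N=\textsf{GM}_{\Gamma'}$ with $g_i({\bf x})=f_i(1-x_1,\dots,1-x_n)$, and the lemma gives $\Gamma'=\Gamma$ directly. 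The main obstacle is the bookkeeping behind the median identity $Med(1-x_1,\dots,1-x_n)=1-Med(x_1,\dots,x_n)$ — keeping track of how the reflection reverses the sorting permutation in both the odd and even cases — while the remaining steps are a one-line substitution together with the FWF normalization.
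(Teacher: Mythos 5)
Your proof is correct and follows essentially the same route as the paper's: both arguments hinge on the reflection identity $Med(1-x_1,\dots,1-x_n)=1-Med(\mathbf{x})$ together with the normalization $\sum_{i=1}^n f_i(\mathbf{x})=1$, treating the constant case separately via idempotency. The only cosmetic difference is that you factor the computation through the $N$-invariance of the individual weight-functions $f_i$ (and optionally Proposition 6), whereas the paper expands the explicit closed-form expression for $\mathbf{H}$ directly; your packaging is slightly more modular but not a different argument.
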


\begin{proof}
	If ${\bf x}=(x,\cdots,x)$, then 
	\begin{eqnarray*}
		\mathbf{H}^N({\bf x}) & = & 1-\mathbf{H}(1-x,1-x,\cdots,1-x)\\
		& = & 1-(1-x)=x=\mathbf{H}({\bf x})
	\end{eqnarray*}
	
	For ${\bf x}\ne (x,\cdots,x)$ and ${\bf y}=(1-x_1,\cdots,1-x_n)$, we have:	
		\begin{eqnarray*}
			\mathbf{H}^N({\bf x}) & = &  1- \textstyle\frac{1}{n-1}\sum\limits_{i=1}^n\left(1-x_i -\frac{(1-x_i)|1-x_i- Med({\bf y})|}{\sum\limits_{j=1}^n|1-x_i - Med({\bf y})|}\right)\\
			& = & 1- \textstyle\frac{1}{n-1}\sum\limits_{i=1}^n\left( 1-x_i -\frac{(1-x_i)|1-x_i- 1 +Med({\bf x})|}{\sum\limits_{j=1}^n|1-x_i - 1+ Med({\bf x})|}\right)\\
			& = &  1- \textstyle\frac{1}{n-1}\sum\limits_{i=1}^n\left( 1-x_i -\frac{(1-x_i)|-x_i+Med({\bf x})|}{\sum\limits_{j=1}^n|-x_i+ Med({\bf x})|}\right)\\
			& = &  1- \textstyle\frac{1}{n-1}\sum\limits_{i=1}^n\left( 1-x_i -\frac{(1-x_i)|x_i-Med({\bf x})|}{\sum\limits_{j=1}^n|x_i-Med({\bf x})|}\right)\\
			& = &  1- \textstyle\frac{1}{n-1}\left[n -\sum\limits_{i=1}^n\left(x_i -\frac{x_i|x_i-Med({\bf x})|}{\sum\limits_{j=1}^n|x_i-Med({\bf x})|}\right)-\textstyle\sum\limits_{i=1}^n\frac{|x_i-Med({\bf x})|}{\sum\limits_{j=1}^n|x_i-Med({\bf x})|}\right]\\
			& = & 1- \textstyle\frac{1}{n-1}\left[n -1 -\sum\limits_{i=1}^n\left(x_i -\frac{x_i|x_i-Med({\bf x})|}{\sum\limits_{j=1}^n|x_i-Med({\bf x})|}\right)\right]\\
			& = &   \textstyle\frac{1}{n-1}\sum\limits_{i=1}^n\left(x_i -\frac{x_i|x_i-Med({\bf x})|}{\sum\limits_{j=1}^n|x_i-Med({\bf x})|}\right)\\
			& = & \mathbf{H}({\bf x})
	\end{eqnarray*}
\end{proof}

\begin{prop}
	If $k>0$, then $\mathbf{H}$ is $(k,\cdots,k)$-increasing.
\end{prop}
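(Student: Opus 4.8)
The plan is to reduce the statement entirely to the shift‑invariance of $\mathbf{H}$, which has already been established. Recall that the weight‑functions of $\mathbf{H}$ are invariant under translations, so that by the corollary asserting that $\mathbf{H}$ is shift‑invariant and homogeneous, we have, for every ${\bf x}=(x_1,\dots,x_n)\in[0,1]^n$ and every $\lambda$ with $(x_1+\lambda,\dots,x_n+\lambda)\in[0,1]^n$, the identity $\mathbf{H}(x_1+\lambda,\dots,x_n+\lambda)=\mathbf{H}({\bf x})+\lambda$. Since $\mathbf{H}=\textsf{GM}_{\Gamma}$ is in particular a \textsf{BGM} operator, Proposition 9 applies directly and yields that $\mathbf{H}$ is a pre‑aggregation that is $(k,\dots,k)$‑increasing; this gives the claim at once.

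For completeness I would also spell out the one‑line direct argument. Fix $k>0$. Given any ${\bf x}\in[0,1]^n$ and any $t>0$ with $(x_1+tk,\dots,x_n+tk)\in[0,1]^n$, put $\lambda=tk$; then $\lambda>0$ and $(x_1+\lambda,\dots,x_n+\lambda)\in[0,1]^n$, so shift‑invariance gives $\mathbf{H}(x_1+tk,\dots,x_n+tk)=\mathbf{H}({\bf x})+tk\ge\mathbf{H}({\bf x})$, the last inequality because $tk>0$. This is exactly the defining inequality of $(k,\dots,k)$‑increasingness.

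The argument has essentially no obstacle; the only two points worth a word of care are that the shift‑invariance of $\mathbf{H}$ is only guaranteed for translations keeping the argument inside $[0,1]^n$, but this is precisely the restriction built into the definition of ${\bf r}$‑increasing, so the hypotheses match; and that the sign hypothesis $k>0$ is exactly what makes $tk>0$, converting the equality $\mathbf{H}({\bf x})+tk$ into the required inequality (for $k<0$ the same computation would instead show $(k,\dots,k)$‑decreasing behaviour).
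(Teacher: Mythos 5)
Your proposal is correct and follows essentially the same route as the paper: the paper's own proof simply notes that since $\mathbf{H}$ is shift-invariant, Proposition 9 (which you also invoke) yields that it is $(k,\cdots,k)$-increasing. Your additional one-line verification with $\lambda = tk$ just makes explicit the computation already contained in the proof of Proposition 9, so nothing is missing or different in substance.
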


\begin{proof}
	As $\mathbf{H}$ is shift-invariant, its follow that $\mathbf{H}$ is $(k,\cdots,k)$-increasing.
\end{proof}

\begin{cor}
	$\mathbf{H}$ is a pre-aggregation function.
\end{cor}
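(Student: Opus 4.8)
The plan is simply to verify the two clauses in the definition of a pre-aggregation function: the boundary condition $\mathbf{H}(0,\ldots,0)=0$, $\mathbf{H}(1,\ldots,1)=1$, and $\mathbf{r}$-increasingness for at least one nonnull direction $\mathbf{r}\in[0,1]^n$. Both ingredients have already been established above, so the argument reduces to assembling them.

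First I would recall that, since $\sum_{i=1}^n f_i(\mathbf{x})=1$ for every $\mathbf{x}\in[0,1]^n$, the idempotency criterion for \textsf{GM} functions applies, so $\mathbf{H}$ is idempotent. Evaluating idempotency at $x=0$ and $x=1$ gives exactly the boundary condition. (Alternatively, one may observe that $\mathbf{H}$ is an averaging \textsf{GM}, so $\mathbf{H}(0,\ldots,0)$ is squeezed between $Min$ and $Max$ of the all-zero vector and hence equals $0$, and likewise $\mathbf{H}(1,\ldots,1)=1$.)

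Second, by the preceding proposition $\mathbf{H}$ is $(k,\ldots,k)$-increasing for every $k>0$; taking $k=1$ shows that $\mathbf{H}$ is $\mathbf{r}$-increasing in the nonnull direction $\mathbf{r}=(1,\ldots,1)\in[0,1]^n$. Combining this with the boundary condition and invoking the definition of pre-aggregation completes the proof.

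There is essentially no obstacle here; the only points worth a word of care are that the chosen direction $(1,\ldots,1)$ is indeed a legitimate (nonnull, lying in $[0,1]^n$) direction, and that idempotency of $\mathbf{H}$ has genuinely been secured for all $\mathbf{x}$, in particular at the two constant vectors appearing in the boundary condition.
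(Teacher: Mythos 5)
Your proof is correct and follows essentially the same route the paper intends: the corollary is drawn directly from the preceding proposition (taking the nonnull direction $(k,\ldots,k)$) together with the idempotency of $\mathbf{H}$ as a \textsf{GM} function, which yields the boundary condition $\mathbf{H}(0,\ldots,0)=0$ and $\mathbf{H}(1,\ldots,1)=1$. The paper leaves these two steps implicit, so your write-up is simply a more explicit version of the same argument.
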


Therefore, $\mathbf{H}$ satisfies the following properties:

\begin{itemize}
	\item Idempotency;
	\item Homogeneity;
	\item Shift-invariance;
	\item Symmetry;
	\item has no neutral element;
	\item has no absorbing elements;
	\item has no zero divisors;
	\item does not have one divisors;
	\item is self dual;
	\item is a preagregation $(k,\cdots,k)$-increasing.
\end{itemize}

Aggregation functions are very important for Computer Science, since in many applications the expected result is a single data, and therefore they usually use aggregation functions to convert this set of data into a unique output. In fact, pre-aggregation can also be applied. In this sense, the Appendix contains a simple application which apply GM functions on the problem of image reduction.

\section{Final remarks}

In this paper we study two generalized forms of Ordered Weighted Averaging function and Mixture function, called \textbf{Generalized Mixture} and \textbf{Bounded Generalized Mixture} functions. This functions are defined by weights, which are obtained dynamically from of each input vector ${\bf x}\in [0,1]^n$. We demonstrated, among other results, that \textsf{OWA} and mixture functions are particular cases of \textsf{GM} and \textsf{BGM} functions, and thus functions likes {\it Arithmetic Mean}, {\it Median}, {\it Maximum}, {\it Minimum} and \textsf{cOWA} are also instances of \textsf{GM} function.

In the second part of this work, we present some properties as well as constructs and examples of GM functions. In particular we define a special \textsf{GM} function, called $\mathbf{H}$. We show that $\mathbf{H}$ satisfies important properties like: Idempotency, symmetry, homogeneity, shift-invariance; it does not have neither zero and one divisors nor neutral elements. We further prove that $\mathbf{H}$ is a pre-aggregation $(k,\cdots,k)$-increasing.

A illustrative application is presented in the Appendix. 
A further insight into the applications of these functions will be addressed in future works.

\section*{References}

\bibliographystyle{elsarticle-num}
\bibliography{refs}

\begin{thebibliography}{34}
\expandafter\ifx\csname natexlab\endcsname\relax\def\natexlab#1{#1}\fi
\providecommand{\url}[1]{\texttt{#1}}
\providecommand{\href}[2]{#2}
\providecommand{\path}[1]{#1}
\providecommand{\DOIprefix}{doi:}
\providecommand{\ArXivprefix}{arXiv:}
\providecommand{\URLprefix}{URL: }
\providecommand{\Pubmedprefix}{pmid:}
\providecommand{\doi}[1]{\href{http://dx.doi.org/#1}{\path{#1}}}
\providecommand{\Pubmed}[1]{\href{pmid:#1}{\path{#1}}}
\providecommand{\bibinfo}[2]{#2}
\ifx\xfnm\relax \def\xfnm[#1]{\unskip,\space#1}\fi
\bibitem[{Dubois and Prade(2004)}]{Dubois2004}
\bibinfo{author}{D.~Dubois}, \bibinfo{author}{H.~Prade},
\newblock \bibinfo{title}{On the use of aggregation operations in information
  fusion processes},
\newblock \bibinfo{journal}{Fuzzy Sets and Systems} \bibinfo{volume}{142}
  (\bibinfo{year}{2004}) \bibinfo{pages}{143 -- 161}.
  \bibinfo{note}{Aggregation Techniques}.
\bibitem[{Farias et~al.(2016)Farias, Lopes, Bedregal, and
  Santiago}]{FariasJIFS2016}
\bibinfo{author}{A.~D.~S. Farias}, \bibinfo{author}{L.~R.~A. Lopes},
  \bibinfo{author}{B.~C. Bedregal}, \bibinfo{author}{R.~H.~N. Santiago},
\newblock \bibinfo{title}{Closure properties for fuzzy recursively enumerable
  languages and fuzzy recursive languages,},
\newblock \bibinfo{journal}{Journal of Intelligent $\&$ Fuzzy Systems}
  \bibinfo{volume}{31} (\bibinfo{year}{2016}) \bibinfo{pages}{1795--1806}.
\bibitem[{Paternain et~al.(2012)Paternain, Jurio, Barrenechea, Bustince,
  Bedregal, and Szmidt}]{Paternain2012}
\bibinfo{author}{D.~Paternain}, \bibinfo{author}{A.~Jurio},
  \bibinfo{author}{E.~Barrenechea}, \bibinfo{author}{H.~Bustince},
  \bibinfo{author}{B.~Bedregal}, \bibinfo{author}{E.~Szmidt},
\newblock \bibinfo{title}{An alternative to fuzzy methods in decision-making
  problems},
\newblock \bibinfo{journal}{Expert Systems with Applications}
  \bibinfo{volume}{39} (\bibinfo{year}{2012}) \bibinfo{pages}{7729 -- 7735}.
\bibitem[{Paternain et~al.(2015)Paternain, Fernandez, Bustince, Mesiar, and
  Beliakov}]{Paternain2015}
\bibinfo{author}{D.~Paternain}, \bibinfo{author}{J.~Fernandez},
  \bibinfo{author}{H.~Bustince}, \bibinfo{author}{R.~Mesiar},
  \bibinfo{author}{G.~Beliakov},
\newblock \bibinfo{title}{Construction of image reduction operators using
  averaging aggregation functions},
\newblock \bibinfo{journal}{Fuzzy Sets and Systems} \bibinfo{volume}{261}
  (\bibinfo{year}{2015}) \bibinfo{pages}{87 -- 111}. \bibinfo{note}{Theme:
  Aggregation operators}.
\bibitem[{Beliakov et~al.(2016)Beliakov, Bustince, and Calvo}]{Gleb2016}
\bibinfo{author}{G.~Beliakov}, \bibinfo{author}{H.~Bustince},
  \bibinfo{author}{T.~Calvo}, \bibinfo{title}{A Practical Guide to Averaging
  Functions}, volume \bibinfo{volume}{329} of \textit{\bibinfo{series}{Studies
  in Fuzziness and Soft Computing}}, \bibinfo{publisher}{Springer},
  \bibinfo{year}{2016}.
\bibitem[{Yager(1988)}]{Yager1988}
\bibinfo{author}{R.~R. Yager},
\newblock \bibinfo{title}{Ordered weighted averaging aggregation operators in
  multicriteria decision making},
\newblock \bibinfo{journal}{IEEE Transactions on Systems, Man, and Cybernetics}
  \bibinfo{volume}{18} (\bibinfo{year}{1988}) \bibinfo{pages}{183 -- 190}.
\bibitem[{Pereira and Pasi(1999)}]{Pereira1999}
\bibinfo{author}{R.~A.~M. Pereira}, \bibinfo{author}{G.~Pasi},
\newblock \bibinfo{title}{On non-monotonic aggregation: mixture operators},
\newblock in: \bibinfo{booktitle}{Proc. 4th Meeting of the EURO Working Group
  on Fuzzy Sets (EUROFUSE'99) and 2nd Internat. Conf. on Soft and Intelingent
  Computing (SIC'99)}, \bibinfo{address}{Budapest, Hungary},
  \bibinfo{year}{1999}.
\bibitem[{Pereira(2000)}]{Pereira2000}
\bibinfo{author}{R.~A.~M. Pereira},
\newblock \bibinfo{title}{The orness of mixture operators: the exponential
  case},
\newblock in: \bibinfo{booktitle}{Proc. 8th Internat. Conf. on Information
  Processing and Management of Uncertainty in Knowledge-Based Systems
  (IPMU'200)}, \bibinfo{address}{Madrid, Spain}, \bibinfo{year}{2000}.
\bibitem[{Pereira and Ribeiro(2003)}]{Pereira2003}
\bibinfo{author}{R.~A.~M. Pereira}, \bibinfo{author}{R.~A. Ribeiro},
\newblock \bibinfo{title}{Aggregation with generalized mixture operators using
  weighting functions},
\newblock \bibinfo{journal}{Fuzzy Sets and Systems} \bibinfo{volume}{137}
  (\bibinfo{year}{2003}) \bibinfo{pages}{43 -- 58}. \bibinfo{note}{Preference
  Modelling and Applications}.
\bibitem[{Yager(2006)}]{Yager2006}
\bibinfo{author}{R.~R. Yager},
\newblock \bibinfo{title}{Centered {OWA} operators},
\newblock \bibinfo{journal}{Soft Computing} \bibinfo{volume}{11}
  (\bibinfo{year}{2006}) \bibinfo{pages}{631--639}.
\bibitem[{Lizasoain and Moreno(2013)}]{LIZASOAIN2013}
\bibinfo{author}{I.~Lizasoain}, \bibinfo{author}{C.~Moreno},
\newblock \bibinfo{title}{Owa operators defined on complete lattices},
\newblock \bibinfo{journal}{Fuzzy Sets and Systems} \bibinfo{volume}{224}
  (\bibinfo{year}{2013}) \bibinfo{pages}{36 -- 52}. \bibinfo{note}{Theme:
  Aggregation functions and implications}.
\bibitem[{Llamazares(2015)}]{LLAMAZARES2015131}
\bibinfo{author}{B.~Llamazares},
\newblock \bibinfo{title}{Constructing choquet integral-based operators that
  generalize weighted means and owa operators},
\newblock \bibinfo{journal}{Information Fusion} \bibinfo{volume}{23}
  (\bibinfo{year}{2015}) \bibinfo{pages}{131 -- 138}.
\bibitem[{Xu(2006)}]{XU2006231}
\bibinfo{author}{Z.~Xu},
\newblock \bibinfo{title}{Induced uncertain linguistic owa operators applied to
  group decision making},
\newblock \bibinfo{journal}{Information Fusion} \bibinfo{volume}{7}
  (\bibinfo{year}{2006}) \bibinfo{pages}{231 -- 238}.
\bibitem[{Yager(2010)}]{YAGER2010374}
\bibinfo{author}{R.~R. Yager},
\newblock \bibinfo{title}{Lexicographic ordinal owa aggregation of multiple
  criteria},
\newblock \bibinfo{journal}{Information Fusion} \bibinfo{volume}{11}
  (\bibinfo{year}{2010}) \bibinfo{pages}{374 -- 380}.
\bibitem[{Bustince et~al.(2015)Bustince, Fernandez, Koles\'arov\'a, and
  Mesiar}]{Bustince2015}
\bibinfo{author}{H.~Bustince}, \bibinfo{author}{J.~Fernandez},
  \bibinfo{author}{A.~Koles\'arov\'a}, \bibinfo{author}{R.~Mesiar},
\newblock \bibinfo{title}{Directional monotonicity of fusion functions},
\newblock \bibinfo{journal}{European Journal of Operational Research}
  \bibinfo{volume}{244} (\bibinfo{year}{2015}) \bibinfo{pages}{300 -- 308}.
\bibitem[{Hancer et~al.(2015)Hancer, Xue, Zhang, Karaboga, and
  Akay}]{Hancer2015}
\bibinfo{author}{E.~Hancer}, \bibinfo{author}{B.~Xue},
  \bibinfo{author}{M.~Zhang}, \bibinfo{author}{D.~Karaboga},
  \bibinfo{author}{B.~Akay},
\newblock \bibinfo{title}{A multi-objective artificial bee colony approach to
  feature selection using fuzzy mutual information},
\newblock in: \bibinfo{booktitle}{Evolutionary Computation (CEC), 2015 IEEE
  Congress on}, \bibinfo{year}{2015}, pp. \bibinfo{pages}{2420--2427}.
\bibitem[{Lingling et~al.(2012)Lingling, Xian, Pengju, and
  Zhigang}]{Lingling2015}
\bibinfo{author}{L.~Lingling}, \bibinfo{author}{Z.~Xian},
  \bibinfo{author}{H.~Pengju}, \bibinfo{author}{L.~Zhigang},
\newblock \bibinfo{title}{The research on the method of fuzzy information
  processing},
\newblock in: \bibinfo{booktitle}{System Science, Engineering Design and
  Manufacturing Informatization (ICSEM), 2012 3rd International Conference on},
  volume~\bibinfo{volume}{2}, \bibinfo{year}{2012}, pp.
  \bibinfo{pages}{47--50}.
\bibitem[{Bustince et~al.(2013)Bustince, Galar, Bedregal, Koles\'arov\'a, and
  Mesiar}]{Bustince2013}
\bibinfo{author}{H.~Bustince}, \bibinfo{author}{M.~Galar},
  \bibinfo{author}{B.~Bedregal}, \bibinfo{author}{A.~Koles\'arov\'a},
  \bibinfo{author}{R.~Mesiar},
\newblock \bibinfo{title}{A new approach to interval-valued choquet integrals
  and the problem of ordering in interval-valued fuzzy set applications},
\newblock \bibinfo{journal}{IEEE Transactions on Fuzzy Systems}
  \bibinfo{volume}{21} (\bibinfo{year}{2013}) \bibinfo{pages}{1150--1162}.
\bibitem[{Yager et~al.(2011)Yager, Gumrah, and Reformat}]{Yager2011}
\bibinfo{author}{R.~R. Yager}, \bibinfo{author}{G.~Gumrah},
  \bibinfo{author}{M.~Z. Reformat},
\newblock \bibinfo{title}{Using a web personal evaluation tool - pet for
  lexicographic multi-criteria service selection},
\newblock \bibinfo{journal}{Knowledge-Based Systems} \bibinfo{volume}{24}
  (\bibinfo{year}{2011}) \bibinfo{pages}{929 -- 942}.
\bibitem[{Beliakov et~al.(2012)Beliakov, Bustince, and
  Paternain}]{Beliakov2012}
\bibinfo{author}{G.~Beliakov}, \bibinfo{author}{H.~Bustince},
  \bibinfo{author}{D.~Paternain},
\newblock \bibinfo{title}{Image reduction using means on discrete product
  lattices},
\newblock \bibinfo{journal}{IEEE Transactions on Image Processing}
  \bibinfo{volume}{21} (\bibinfo{year}{2012}).
\bibitem[{Joseph et~al.(2014)Joseph, Singh, and Manikandan}]{Joseph2014}
\bibinfo{author}{R.~P. Joseph}, \bibinfo{author}{C.~S. Singh},
  \bibinfo{author}{M.~Manikandan},
\newblock \bibinfo{title}{Brain tumor {MRI} image segmentation and detection in
  image processing},
\newblock \bibinfo{journal}{International Journal of Research and Tecnology}
  \bibinfo{volume}{3} (\bibinfo{year}{2014}).
\bibitem[{Liang and Xu(2014)}]{Liang2014}
\bibinfo{author}{X.~Liang}, \bibinfo{author}{W.~Xu},
\newblock \bibinfo{title}{Aggregation method for motor drive systems},
\newblock \bibinfo{journal}{Electric Power Systems Research}
  \bibinfo{volume}{117} (\bibinfo{year}{2014}) \bibinfo{pages}{27 -- 35}.
\bibitem[{Beliakov et~al.(2016)Beliakov, Bunstince, and Calvo}]{Beliakov2016}
\bibinfo{author}{G.~Beliakov}, \bibinfo{author}{H.~Bunstince},
  \bibinfo{author}{T.~Calvo}, \bibinfo{title}{A Practical Guide to Averaging
  Functions}, volume \bibinfo{volume}{329} of \textit{\bibinfo{series}{Studies
  in Fuzziness and Soft Computing}}, \bibinfo{edition}{1} ed.,
  \bibinfo{publisher}{Springer}, \bibinfo{address}{Switzeland},
  \bibinfo{year}{2016}.
\bibitem[{Dubois and Prade(2000)}]{Dubois2000}
\bibinfo{author}{D.~Dubois}, \bibinfo{author}{H.~Prade},
  \bibinfo{title}{Fundamentals of Fuzzy Sets}, volume~\bibinfo{volume}{7} of
  \textit{\bibinfo{series}{The Handbooks of Fuzzy Sets}}, \bibinfo{edition}{1}
  ed., \bibinfo{publisher}{Springer}, \bibinfo{address}{New York},
  \bibinfo{year}{2000}.
\bibitem[{Grabisch et~al.(2009)Grabisch, Marichal, Mesiar, and
  Pap}]{Grabisch2009}
\bibinfo{author}{M.~Grabisch}, \bibinfo{author}{J.~L. Marichal},
  \bibinfo{author}{R.~Mesiar}, \bibinfo{author}{E.~Pap},
  \bibinfo{title}{Aggregation Functions}, volume \bibinfo{volume}{127} of
  \textit{\bibinfo{series}{Encyclopedia of Mathematics and its Applications}},
  \bibinfo{publisher}{University Press Cambridge}, \bibinfo{year}{2009}.
\bibitem[{Mesiar and $\check{S}$pirkov\'a(2006)}]{Messiar2006}
\bibinfo{author}{R.~Mesiar}, \bibinfo{author}{J.~$\check{S}$pirkov\'a},
\newblock \bibinfo{title}{Weighted means and weighting functions},
\newblock \bibinfo{journal}{Kybernetika} \bibinfo{volume}{42}
  (\bibinfo{year}{2006}) \bibinfo{pages}{151--160}.
\bibitem[{Mesiar et~al.(2008)Mesiar, $\check{S}$pirkov\'a, and
  Vavr\'ikov\'a}]{Mesiar2008}
\bibinfo{author}{R.~Mesiar}, \bibinfo{author}{J.~$\check{S}$pirkov\'a},
  \bibinfo{author}{L.~Vavr\'ikov\'a},
\newblock \bibinfo{title}{Weighted aggregation operators based on
  minimization},
\newblock \bibinfo{journal}{Information Sciences} \bibinfo{volume}{178}
  (\bibinfo{year}{2008}) \bibinfo{pages}{1133 -- 1140}.
\bibitem[{Wilkin and Beliakov(2014)}]{Wilkin2014}
\bibinfo{author}{T.~Wilkin}, \bibinfo{author}{G.~Beliakov},
\newblock \bibinfo{title}{Weakly monotone averaging functions},
\newblock in: \bibinfo{booktitle}{IPMU'2014}, volume \bibinfo{volume}{444},
  \bibinfo{year}{2014}, pp. \bibinfo{pages}{364--373}.
\bibitem[{Wilkin and Beliakov(2015)}]{Wilkin2015}
\bibinfo{author}{T.~Wilkin}, \bibinfo{author}{G.~Beliakov},
\newblock \bibinfo{title}{Weakly monotone aggregation functions},
\newblock \bibinfo{journal}{International Journal of Intelligent. Systems}
  \bibinfo{volume}{30} (\bibinfo{year}{2015}) \bibinfo{pages}{144--169}.
\bibitem[{Lucca et~al.(2016)Lucca, Sanz, Dimuro, Bedregal, Mesiar,
  Koles\'arov\'a, and Bustince}]{Luca2016}
\bibinfo{author}{G.~Lucca}, \bibinfo{author}{J.~A. Sanz},
  \bibinfo{author}{G.~P. Dimuro}, \bibinfo{author}{B.~Bedregal},
  \bibinfo{author}{R.~Mesiar}, \bibinfo{author}{A.~Koles\'arov\'a},
  \bibinfo{author}{H.~Bustince},
\newblock \bibinfo{title}{Preaggregation functions: Construction and an
  application},
\newblock \bibinfo{journal}{IEEE Transactions on Fuzzy Systems}
  \bibinfo{volume}{24} (\bibinfo{year}{2016}) \bibinfo{pages}{260--272}.
\bibitem[{Gonzales and Woods(2008)}]{Gonzales}
\bibinfo{author}{R.~C. Gonzales}, \bibinfo{author}{R.~E. Woods},
  \bibinfo{title}{Digital Image Processing}, \bibinfo{edition}{3rd} ed.,
  \bibinfo{publisher}{Pearson}, \bibinfo{address}{New Jersey},
  \bibinfo{year}{2008}.
\bibitem[{Keys(1981)}]{Keys1981}
\bibinfo{author}{R.~Keys},
\newblock \bibinfo{title}{Cubic convolution interpolation for digital image
  processing},
\newblock \bibinfo{journal}{IEEE Transactions on Acoustics, Speech, and Signal
  Processing} \bibinfo{volume}{29} (\bibinfo{year}{1981})
  \bibinfo{pages}{1153--1160}.
\bibitem[{Lehmann et~al.(1999)Lehmann, Gonner, and Spitzer}]{Lehmann1999}
\bibinfo{author}{T.~M. Lehmann}, \bibinfo{author}{C.~Gonner},
  \bibinfo{author}{K.~Spitzer},
\newblock \bibinfo{title}{Survey: interpolation methods in medical image
  processing},
\newblock \bibinfo{journal}{IEEE Transactions on Medical Imaging}
  \bibinfo{volume}{18} (\bibinfo{year}{1999}) \bibinfo{pages}{1049--1075}.
\bibitem[{Thevenaz et~al.(2000)Thevenaz, Blu, and Unser}]{Thevenaz2000}
\bibinfo{author}{P.~Thevenaz}, \bibinfo{author}{T.~Blu},
  \bibinfo{author}{M.~Unser},
\newblock \bibinfo{title}{Interpolation revisited},
\newblock \bibinfo{journal}{IEEE Transactions on Medical Imaging}
  \bibinfo{volume}{19} (\bibinfo{year}{2000}) \bibinfo{pages}{739--758}.

\end{thebibliography}

\section{Appendix: Illustrative example of application (Image reduction)}

In this part of our work we use the \textsf{GM} functions $Min$, $Max$, $Arith$, $Med$, $\textsf{cOWA}$ and \textbf{H} to build image reduction operators. A broad discussion on this field of application can be found in \cite{Gonzales}. Here, we are just interested to shows a possibility of application for our functions.

\subsection{Methodological process}

The methodological process is the same as that used in \cite{Paternain2015}, it consists of:

\begin{enumerate}
	\item Reduce the input images using the $Min$, $Max$, $Arith$, $Med$, \textsf{cOWA} and $\mathbf{H}$;
	\item Magnify the reduced images to the original size using three different method: (1) The nearest neighbor interpolation, (2) The bilinear interpolation and (2) the bicubic interpolation (see \cite{Gonzales,Keys1981,Lehmann1999,Thevenaz2000}) ;
	\item Compare the output images with the original one using the measure $PSNR$.
\end{enumerate}

We use ten original images, in grayscale, of size $512\times 512$. The obtained results are shown below (The bold value represents the high quality image, and the italic value represents the second high quality image):

\subsection{Results}

\begin{table}[!htb]
	\centering
	\tiny
	\begin{center}
		\hspace{2.5cm} USING $2\times 2$ BLOCKS \hspace{4.8cm} USING $4\times 4$ BLOCKS
	\end{center}
	\begin{tabular}{ccccccc|cccccc}
		& $Min$ & $Max$ & $Med$ & $Arith$ & \textsf{cOWA} & $\mathbf{H}$ & $Min$ & $Max$ & $Med$ & $Arith$ & \textsf{cOWA} & $\mathbf{H}$\\
		\hline\
		Img 01 & 26,68848 & 26,60371 & 30,66996 & {\bf 30,89667} & 30,73823 & {\it 30,75448} & 21,37117 &	20,83960 &	26,73708 &	{\bf 27,07854} &	27,01270 &	{\it 27,07067}\\
		Img 02 & 33,50403 &	33,46846 &	37,51525 &	{\bf 37,64240} &	37,57713 &	{\it 37,58138} & 19,70858 &	19,54290 &	23,92198 &	{\bf 24,07786} &	24,05762 &	{\it 24,07478}\\
		Img 03 & 26,80034 &	26,74460 &	30,47904 &	{\bf 30,55504} & {\it 30,52128} &	30,51564 & 20,46198 &	20,82576 &	25,64113 &	{\bf 26,16092} &	26,08186 &	{\it 26,14607}\\
		Img 04 & 28,90415 &	28,83284 &	32,88120 &	{\bf 33,01225} &	{\it 32,94828} &	32,94146 & 22,59335 &	22,24354 &	27,94347 &	{\bf 28,26449} &	28,19574 &	{\it 28,25700}\\
		Img 05 & 25,04896 &	25,04438 &	28,75582 &	{\bf 28,85475} &	{\it 28,81506} &	28,79901 & 18,86628 &	19,55278 &	24,12507 &	{\bf 24,68962} &	24,58713 &	{\it 24,67322}\\
		Img 06 & 38,10156 &	38,07248 &	42,08612 &	{\bf 42,13003} &	{\it 42,12316} &	42,11653 & 29,48308 &	29,26559 &	34,89670 &	{\bf 35,11481} &	35,09436 &	{\it 35,11023}\\
		Img 07 & 24,48520 &	24,38872 &	28,31229 &	{\bf 28,45667} &	28,35114 &	{\it 28,37668} & 18,95771 &	18,72670 &	24,18918 &	{\bf 24,55073} &	24,48373 &	{\it 24,54269}\\
		Img 08 & 23,69576 &	23,73464 &	27,41557 &	{\bf 27,51579} &	{\it 27,46383} &	27,45864 & 17,71071 &	18,59348 &	23,11305 &	{\bf 23,54332} &	23,43522 &	{\it 23,53119}\\
		Img 09 & 26,19262 &	26,09448 &	30,06427 &	{\bf 30,22940} &	30,11893 &	{\it 30,13332} & 20,97846 &	20,44416 &	26,23824 &	{\bf 26,53197} &	26,42064 &	{\it 26,52562}\\
		Img 10 & 21,48459 &	21,41350 &	25,37475 &	{\bf 25,58054} &	25,43016 &	{\it 25,45073} & 16,47636 &	16,22205 &	21,89755 &	{\bf 22,22614} &	22,10356 &	{\it 22,21825}\\
		\hline\
		Avg & 27,49057 &	27,43978 &	31,35543 &	{\bf 31,48735} &	31,40872 &	{\it 31,41279} & 20,66077 &	20,62565 &	25,87034 &	{\bf 26,22384} &	26,14726 & {\it 26,21497}
	\end{tabular}
	\caption{$PSNR$ values of reconstruction of imagens by nn interpolation.}
	
	\begin{center}
		\hspace{2.5cm} USING $2\times 2$ BLOCKS \hspace{4.8cm} USING $4\times 4$ BLOCKS
	\end{center}
	\begin{tabular}{ccccccc|cccccc}
		& $Min$ & $Max$ & $Med$ & $Arith$ & \textsf{cOWA} & $\mathbf{H}$ & $Min$ & $Max$ & $Med$ & $Arith$ & \textsf{cOWA} & $\mathbf{H}$\\
		\hline\
		Img 01 & 27,25658	& 27,41249 & {\it 31,70137} & 31,66148 & 31,64818 & {\bf 31,70944} & 21,84394 & 21,46624 & {\it 28,12885} & 28,03911 & {\bf 28,13262} & 28,08806\\
		Img 02 & 29,07393 &	29,09065 &	29,98667 &	{\bf 30,00618} &	{\it 29,99790} &	29,99295 & 20,22210 &	19,99324 &	24,09349 &	24,09114 &	{\it 24,09696} &	{\bf 24,10058}\\
		Img 03 & 28,07377 &	27,53953 &	{\bf 31,96271} &	31,87901 &	31,87085 &	{\it 31,94673} & 21,36383 &	21,65788 &	27,34577 &	27,53279 &	{\bf 27,57114} &	{\it 27,56163}\\
		Img 04 & 29,70934 &	29,78913 &	{\bf 34,39128} &	34,28215 &	34,31414 &	{\it 34,37504} & 23,23057 &	22,96007 &	{\bf 29,81717} &	29,65596 &	{\it 29,77096} &	29,71475\\
		Img 05 & 26,30684 &	25,74955 &	{\bf 30,17965} &	30,08193 &	30,05530 &	{\it 30,16533} & 19,54307 &	20,06159 &	25,32192 &	25,47922 &	{\it 25,51400} &	{\bf 25,51442}\\
		Img 06 & 40,09734 &	39,94107 &	{\bf 48,99047} &	48,55730 &	48,52986 &	{\it 48,86710}  & 30,92215 &	30,60188 &	{\bf 42,72668} &	41,77064 &	{\it 41,99358} &	41,97442\\
		Img 07 & 25,10689 &	25,04408 &	{\it 28,93328} &	28,92340 &	28,89276 &	{\bf 28,94254} & 19,43662 &	19,19604 &	24,96897 &	25,00413 &	{\bf 25,05911} &	{\it 25,02899}\\
		Img 08 & 24,63619 &	24,10410 &	{\it 28,19100} &	28,17758 &	28,16818 &	{\bf 28,19312} & 18,28578 &	18,86696 &	23,87169 &	{\it 24,09781} &	24,07356 &	{\bf 24,10310}\\
		Img 09 & 26,60297 &	26,71398 &	30,54028 &	{\bf 30,56126} &	30,52693 &	{\it 30,55733} & 21,32747 &	20,91360 &	27,09762 &	27,10526 &	{\bf 27,16280} &	{\it 27,13073}\\
		Img 10 & 21,93973 &	21,90280 &	25,71329 &	{\bf 25,74295} &	25,69402 &	{\it 25,73353} & 16,77848 &	16,57833 &	22,58040 &	22,61488 &	{\it 22,63949} &	{\bf 22,63987}\\
		\hline\
		Avg & 27,88036 &	27,72874 &	{\bf 32,05900} &	31,98732 &	31,96981 &	{\it 32,04831} & 21,29540 &	21,22958 &	{\it 27,59525} &	27,53909 &	{\bf 27,60142} &	27,58566
	\end{tabular}
	\caption{$PSNR$ values of reconstruction of imagens by bilinear interpolation.}
	
	\begin{center}
		\hspace{2.5cm} USING $2\times 2$ BLOCKS \hspace{4.8cm} USING $4\times 4$ BLOCKS
	\end{center}
	\begin{tabular}{ccccccc|cccccc}
		
		& $Min$ & $Max$ & $Med$ & $Arith$ & \textsf{cOWA} & $\mathbf{H}$ & $Min$ & $Max$ & $Med$ & $Arith$ & \textsf{cOWA} & $\mathbf{H}$\\
		\hline\
		Img 01 & 27,39667 & 27,45993 & 32,53367 & {\bf 32,62657} & 32,52946 & {\it 32,58602} & 21,83423 & 21,39364 & 28,64265 & 28,74908 & {\bf 28,80893} & {\it 28,78768}\\
		Img 02 & 30,06149 & 30,00816 & 31,28820 & {\bf 31,31873} & {\it 31,30611} & 31,29877  & 20,20038 & 19,88701 & 24,49596 & {\it 24,56989} & 24,56761 & {\bf 24,57359}\\
		Img 03 & 28,09952 & 27,62931 & {\it 32,92967} & 32,90897 & 32,87767 & {\bf 32,93859} & 21,25132 & 21,55589 & 27,82091 & {\it 28,31402} & 28,28961 & {\bf 28,32229}\\
		Img 04 & 29,92114 & 29,94430 & {\it 35,70586} & 35,70361 & 35,68906 & {\bf 35,73313} & 23,22310 & 22,89860 & 30,47704 & 30,54773 & {\bf 30,60332} & {\it 30,59348}\\
		Img 05 & 26,38597 & 25,93655 & {\it 31,32017} & 31,30790 & 31,25508 & {\bf 31,33640} & 19,45423 & 20,06391 & 25,74518 & {\it 26,18606} & 26,15139 & {\bf 26,20092}\\
		Img 06 & 40,05229 & 40,02173 & {\bf 51,35284} & 51,07478 & 51,01447 & {\it 51,31081} & 30,81953 & 30,48357 & {\bf 44,31891} & 43,83439 & 44,03526 & {\it 44,05492}\\
		Img 07 & 25,23188 & 25,16984 & 29,85564 & {\bf 29,93609} & 29,85733 & {\it 29,89915} & 19,36949 & 19,11221 & 25,29211 & 25,49221 & {\it 25,49999} & {\bf 25,50641}\\
		Img 08 & 24,72669 & 24,32047 & 29,10402 & {\bf 29,15066} & 29,11737 & {\it 29,12822} & 18,21007 & 18,91559 & 24,17857 & {\bf 24,57330} & 24,49174 & {\it 24,56575}\\
		Img 09 & 26,73252 & 26,79140 & 31,27454 & {\bf 31,38274} & 31,29368 & {\it 31,32452} & 21,32252 & 20,85345 & 27,41366 & {\it 27,56839} & 27,55860 & {\bf 27,58354}\\
		Img 10 & 22,04218 & 21,98136 & 26,39147 & {\bf 26,52171} & 26,41585 & {\it 26,44659} & 16,76501 & 16,53815 & 22,82004 & {\it 23,00025} & 22,96201 & {\bf 23,01459}\\
		\hline\
		Avg & 28,06504 & 27,92630 & 33,17561 & {\it 33,19318} & 33,13561 & {\bf 33,20022} & 21,24499 & 21,17020 & 28,12050 & 28,28353 & {\it 28,29685} & {\bf 28,32032}
	\end{tabular}
	\caption{$PSNR$ values of reconstruction of imagens by bicubic interpolation.}
\end{table}

\end{document}